\runningtitle{Electric routing and concurrent flow cutting}
\runningauthor{Jonathan Kelner and Petar Maymounkov}
\begin{document}
\begin{frontmatter}
\title{Electric routing and concurrent flow cutting}
\author[jon]{Jonathan Kelner
\thanks{Research partially supported by NSF grant CCF-0843915.}}
\author[petar]{Petar Maymounkov \footnotemark[\value{footnote}]}

\begin{abstract}
We investigate an oblivious routing scheme, amenable to distributed computation
and resilient to graph changes, based on electrical flow.  Our main technical
contribution is a new rounding method which we use to obtain a bound on the
$\ell_1\to\ell_1$ operator norm of the inverse graph Laplacian. We show how 
this norm reflects both latency and congestion of electric routing.
\end{abstract}
\tockeywords{oblivious routing, spectral graph theory, Laplacian operator}
\tocacm{F.2.2, G.2.2}
\tocams{68Q25, 68R10}
\end{frontmatter}

\newcommand{\kron}{\chi}
\newcommand{\imbal}{\zeta}
\newcommand{\Li}{L^{\dag}}
\newcommand{\el}{{\mathcal{E}}}
\newcommand{\fopt}{\mathrm{F}_{\mathrm{opt}}}
\newcommand{\cn}{{G}}
\renewcommand{\opt}{\mathrm{opt}}
\newcommand{\pot}{\varphi}
\newcommand{\mix}{\mathrm{mix}}

\section{Introduction}
\label{sec:intro}

\paragraph{Overview}
We address a vision of the Internet where every participant exchanges messages
with their direct friends and no one else. Yet such an Internet should be able
to support reliable and efficient routing to remote locations identified by
unchanging names in the presence of an ever changing graph of connectivity. 

Modestly to this end, this paper investigates the properties of routing along
the electric flow in a graph (\textit{electric routing} for short) for intended
use in such distributed systems, whose topology changes over
time. We focus on the class of expanding graphs which, we believe, gives a good
trade-off between applicability and precision in modeling a large class of
real-world networks.  We address distributed representation and 
computation.  As a measure of performanace, we show that electric routing,
being an oblivious routing scheme, achieves minimal maximum edge congestion (as
compared to a demand-dependent optimal scheme). Furthermore, 
we show that electric
routing continues to work (on average) in the presence of large edge failures
ocurring after the routing scheme has been computed,
which attests to its applicability in changing environments. We now proceed to
a formal definition of oblivious routing and statement of our results.

\paragraph{Oblivious routing}
The object of interest is a graph $G=(V,E)$ (with $V=[n]$ and $|E|=m$)
undirected, positively edge-weighted by $w_{u,v}\ge 0$, and not necessarily
simple.  The intention is that higher $w_{u,v}$ signifies stronger connectivity
between $u$ and $v$; in particular, $w_{u,v}=0$ indicates the absence of edge
$(u,v)$.  For analysis purposes, 
we fix an arbitrary orientation ``$\to$'' on the edges $(u,v)$ of $G$, i.e.
if $(u,v)$ is an edge then exactly one of $u\to v$ or $v\to u$ holds.
Two important operators are associated to every $G$.
The \textit{discrete gradient} operator $B\in\matha{R}^{E\times V}$, sending
functions on $V$ to functions on the \textit{undirected} edge set $E$, is
defined as $\kron_{(u,v)}^* B := \kron_u -\kron_v$ if $u\to v$, and
$\kron_{(u,v)}^* B := \kron_v -\kron_u$ otherwise, where $\kron_y$ is the
Kronecker delta function with mass on $y$.  For $e\in E$, we use the
shorthand $B_e:= (\kron_e B)^*$.  The \textit{discrete divergence} operator is
defined as $B^*$.

A \textit{(single-commodity) demand} of 
amount $\alpha > 0$ between $s\in V$ and $t\in V$ is
defined as the vector $d=\alpha(\kron_s-\kron_t)\in\matha{R}^V$. A
\textit{(single-commodity) flow}
on $G$ is defined as a vector $f\in\matha{R}^E$, so that $f_{(u,v)}$ equals the
flow value from $u$ towards $v$ if $u\to v$, and the negative of this value
otherwise.  We also use the notation $f_{u\to v}:=f_{(u,v)}$ if $u\to v$, and
$f_{u\to v}:=-f_{(u,v)}$ otherwise. We say that flow $f$ \textit{routes} demand
$d$ if $B^*f=d$. This is a linear algebraic way of encoding the fact that
$f$ is an $(s,t)$-flow of amount $\alpha$.
A \textit{multi-commodity demand}, also called a \textit{demand set}, is a
matrix whose columns constitute the individual demands' vectors. It is given as
the direct product $\oplus_\tau d_\tau$ of its columns.
A \textit{multi-commodity flow} is represented as a matrix $\oplus_\tau
f_\tau$, given as a direct product of its columns, the single-commodity
flows. For clarity, we write $f_{\tau,e}$ for $(f_\tau)_e$.
The flow $\oplus_\tau f_\tau$ \textit{routes} the demand 
set $\oplus_\tau d_\tau$ if 
$B^*f_\tau = d_\tau$, for all $\tau$, or in matrix notation
$B^*(\oplus_\tau f_\tau)=\oplus_\tau d_\tau$.
The \textit{congestion} $\|\cdot\|_\cn$ of a multi-commodity 
flow measures the load
of the most-loaded edge, relative to its capacity. It is given by
\begin{align}
\|\oplus_\tau f_\tau\|_\cn 
  := \max_e \sum_\tau \big|f_{\tau,e}/w_e\big|
  =\|(\oplus_\tau f_\tau)^*W^{-1}\|_{1\to 1},
  \text{ where }\|A\|_{1\to 1}:=\sup_{x\neq 0} \frac{\|Ax\|_1}{\|x\|_1}.
  \label{eq:congnorm}
\end{align}  

An \textit{oblivious routing scheme} is a (not necessarily linear) 
function $R:\matha{R}^V\to\matha{R}^E$ which has the property that $R(d)$
routes $d$ when $d$ is a valid single-commodity demand 
(according to our definition). We extend $R$ to a function over demand sets
by defining $R(\oplus_\tau d_\tau):=\oplus_\tau R(d_\tau)$. This says that
each demand in a set is routed independently of the others by its corresponding
$R$-flow. 
We measure the ``goodness'' of an oblivious routing scheme by
the maximum traffic that it incurs on an edge (relative to its capacity)
compared to that of the optimal (demand-dependent) routing.
This is captured
by the \textit{competitive ratio $\eta_R$} of the routing scheme $R$, defined
\begin{align}
\eta_R := 
    \sup_{\oplus_\tau d_\tau}
    \sup_{\substack{\oplus_\tau f_\tau \\ 
    B^* (\oplus_\tau f_\tau) = \oplus_\tau d_\tau}}
    \frac{\|R(\oplus_\tau d_\tau)\|_\cn}{\|\oplus_\tau f_\tau\|_\cn}.
    \label{def:eta}
\end{align}
Let $\el$ denote the (yet undefined) function corresponding to electric
routing. Our main theorem states:
\begin{theorem}\label{theo:main}
For every undirected graph $G$ with unit capacity edges, maximum degree
$d_{\max}$ and \textit{vertex expansion}
$\alpha:=\min_{S\subseteq V}\frac{|E(S,S^\cmpl)|}{\min\{|S|,|S^\cmpl|\}}$, 
one has
$\eta_{\el} \le 
\Big(4\ln \frac{n}{2}\Big) \cdot
  \Big(\alpha\ln \frac{2d_{\max}}{2d_{\max}-\alpha}\Big)^{-1}
$. This is tight up to a factor of $O(\ln\ln n)$.
\end{theorem}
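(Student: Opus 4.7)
The plan is to reduce $\eta_{\el}$ to an operator-norm bound on $\Li$ and then establish that bound via a level-set / rounding argument combining current conservation with vertex expansion.

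\textbf{Stage 1 (reduction to an operator norm).} Electric routing is linear: $\el(d) = B\Li d$ in the unit-capacity setting. Given any demand set $\oplus_\tau d_\tau$ with optimal congestion $1$, witnessed by a routing $\oplus_\tau g_\tau$, one has $d_\tau = B^*g_\tau$ and hence $\el(d_\tau) = B\Li B^*g_\tau$; summing over $\tau$ and using $\sum_\tau |g_{\tau,e}|\le 1$ for every edge yields
$$\eta_{\el} \;\le\; \|B\Li B^*\|_{1\to 1} \;=\; \max_{(s,t)\in E}\, \sum_{(u,v)\in E} \bigl|\pot_u - \pot_v\bigr|,\qquad \pot:=\Li(\kron_s-\kron_t).$$
A few lines of bookkeeping will show this is controlled, up to the factors appearing in the statement, by the $\ell_1\to\ell_1$ operator norm of $\Li$ on mean-zero inputs, namely $\max_{s\neq t}\|\pot\|_1$. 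This is exactly the quantity highlighted in the abstract.

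\textbf{Stage 2 (bounding $\|\pot\|_1$: the main technical step).} Fix $s\neq t$ and let $\pot:=\Li(\kron_s-\kron_t)$, which is the electric potential for a unit current flowing from $s$ to $t$. I would invoke the layer-cake identity
$$\|\pot\|_1 \;=\; \int_{-\infty}^{+\infty} \bigl|\{v:\pot_v\ge\theta\}\bigr|\,d\theta$$
(after recentering $\pot$) and show that the super-level sets $U_\theta:=\{v:\pot_v\ge\theta\}$ shrink geometrically as $\theta$ grows. Three ingredients feed in: (i) current conservation, giving $\sum_{(u,v)\in\partial U_\theta,\,u\in U_\theta}(\pot_u-\pot_v)=1$ for all $\theta\in(\pot_t,\pot_s)$; (ii) vertex expansion, $|\partial U_\theta|\ge\alpha\min(|U_\theta|,|U_\theta^\cmpl|)$; and (iii) bounded degree, $|\partial U_\theta|\le d_{\max}\min(|U_\theta|,|U_\theta^\cmpl|)$. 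The paper's novel rounding step would match each boundary edge to a vertex leaving $U_\theta$ in such a way that an infinitesimal increase $d\theta$ in the threshold evicts a fraction $\Theta(\alpha/(2d_{\max}))$ of the remaining vertices. Integrating the resulting first-order inequality gives an exponential decay of $|U_\theta|$ starting from $|U_0|\le n/2$, and substituting into the layer-cake formula yields exactly
$$\|\pot\|_1 \;=\; O\!\left(\frac{\ln(n/2)}{\alpha\,\ln\!\bigl(2d_{\max}/(2d_{\max}-\alpha)\bigr)}\right),$$
which matches the claimed bound (the explicit constant $4$ falls out of a careful version of the rounding calculation).

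\textbf{Stage 3 (tightness up to $\ln\ln n$).} For the matching lower bound I would consider a $d$-regular vertex expander and a source/sink pair at near-diameter distance $\Theta(\log n/\log d)$. On such graphs the electric current spreads across $\Omega(\log n/\log\log n)$ essentially disjoint short paths, forcing $\eta_{\el}=\Omega(\log n/\log\log n)$, whereas the upper bound with $d_{\max}$ and $\alpha$ both $O(1)$ is $O(\log n)$; the $\ln\ln n$ gap is exactly the claimed slack between upper and lower bound.

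\textbf{Anticipated main obstacle.} The heart of the argument is Stage~2. A naive Cheeger-style integration of ingredients (i)--(iii) loses a factor of $\sqrt{n}$ or more; the essential novelty is the rounding step that charges each boundary edge to a specific vertex leaving the super-level set, and it is precisely this step that produces the sharp $\ln\!\bigl(2d_{\max}/(2d_{\max}-\alpha)\bigr)$ in the denominator rather than the looser $\alpha/(2d_{\max})$ that a more naive analysis would yield. Stage~3 is also delicate, since one has to exhibit a graph on which electric flow genuinely smears across many paths rather than concentrating on a single near-shortest one.
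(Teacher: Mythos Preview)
Your Stage~1 matches the paper's reduction. In Stage~2 you assemble the right three ingredients, but the mechanism you propose is not the paper's and has a gap as stated: there is no first-order differential inequality for $|U_\theta|$, because an infinitesimal increment $d\theta$ generically evicts no vertices and the edges in $\partial U_\theta$ have wildly varying potential drops. The paper's argument (``concurrent flow cutting'') is \emph{discrete and adaptive}: starting from the median level $c_0$ it sets $c_{i+1}=c_i-\Delta_i$ with step $\Delta_i:=2/k_i$, where $k_i$ is the number of edges crossing the $i$th level cut; since current conservation gives $\sum_j p_{ij}=1$, this $\Delta_i$ is exactly twice the average length of a cut edge, so by Markov at least $k_i/2$ of those edges terminate before $c_{i+1}$. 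Their far endpoints account for at least $k_i/(2d_{\max})\ge\alpha n_i/(2d_{\max})$ distinct vertices leaving, whence $n_{i+1}\le n_i\bigl(1-\alpha/(2d_{\max})\bigr)$ and $r\le\ln(n/2)\big/\ln\frac{2d_{\max}}{2d_{\max}-\alpha}$. A separate telescoping $\sum_i(n_i-n_{i+1})\sum_{j\le i}\Delta_j=\sum_i n_i\Delta_i\le\frac{2}{\alpha}(r+1)$ then bounds $\|\psi\|_1$. (Incidentally, $\ln\frac{2d_{\max}}{2d_{\max}-\alpha}\sim\alpha/(2d_{\max})$, so your ``sharp versus loose'' distinction is not where the difficulty lies; the point is that the adaptive step makes the recursion close at all.)

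Stage~3 is the wrong route. The paper does not lower-bound electric routing directly; it quotes a lower bound of Hajiaghayi et al.\ showing that \emph{every} oblivious scheme has $\eta_R=\Omega(\ln n/\ln\ln n)$ on a certain family of expanders, which together with the $O(\ln n)$ upper bound gives the claimed $O(\ln\ln n)$ slack. Your sketch is also backwards: if the $(s,t)$ electric current spreads across many essentially disjoint paths then each edge carries \emph{less} flow, lowering congestion; and a single unit $(s,t)$ demand in an unweighted graph has electric congestion at most $1$ (electric flow is acyclic), so one source--sink pair can never by itself witness a super-constant competitive ratio.
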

The competitive ratio in Theorem~\ref{theo:main} is best achievable for any
oblivious routing scheme up to a factor of $O(\ln\ln n)$ due to a lower bound
for expanders, i.e. the case $\alpha = O(1)$, given in~\cite{leighton}.
Theorem~\ref{theo:main} can be extended to other definitions of graph
expansion, weighted and unbounded-degree graphs.  We omit these extensions for
brevity. We also give an unconditional, albeit much worse, bound on $\eta_\el$:

\begin{theorem}\label{theo:cong:univ}
For every unweighted graph on $m$ edges, 
electrical routing has $\eta_\el \le O(m^{1/2})$.
Furthermore, there are families of graphs with corresponding demand sets 
for which $\eta_\el = \Omega(m^{1/2})$.
\end{theorem}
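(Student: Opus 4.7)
}

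The plan is to prove the upper and lower bounds separately, both via short explicit arguments. For the upper bound, I would begin by expressing electric routing as an orthogonal projection on edge space: since all capacities are $1$, one has $\el(d)=B\Li d$, and more usefully $\el(d)=Pf$ for \emph{every} flow $f$ with $B^*f=d$, where $P:=B\Li B^*$ is the symmetric idempotent on $\matha{R}^E$ whose range is the cut space of $G$. Fixing any multi-commodity flow $\oplus_\tau f^*_\tau$ that attains the optimal congestion $C:=\|\oplus_\tau f^*_\tau\|_\cn$, we then have $\el(d_\tau)=Pf^*_\tau$ for every $\tau$, and for any edge $e$,
\begin{align*}
\sum_\tau |(Pf^*_\tau)_e|
\;\le\; \sum_{e'}|P_{e,e'}|\sum_\tau |f^*_{\tau,e'}|
\;\le\; C\sum_{e'}|P_{e,e'}|.
\end{align*}

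The key estimate is to bound the $\ell_1$-row sums of $P$. Cauchy--Schwarz together with $P^2=P=P^*$ yields
\begin{align*}
\sum_{e'}|P_{e,e'}|
\;\le\; \sqrt{m}\Bigl(\sum_{e'} P_{e,e'}^2\Bigr)^{1/2}
= \sqrt{m\,(P^2)_{e,e}}
= \sqrt{m\,P_{e,e}}.
\end{align*}
Since $P_{e,e}=B_e^*\Li B_e$ is exactly the effective resistance between the endpoints of $e$, which is bounded by the edge's own unit resistance, we have $P_{e,e}\le 1$ and hence $\sum_{e'}|P_{e,e'}|\le\sqrt{m}$. Maximizing the previous display over $e$ gives $\|\oplus_\tau\el(d_\tau)\|_\cn\le C\sqrt{m}$, i.e.\ $\eta_\el\le\sqrt{m}$.

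For the matching lower bound I would exhibit a single-commodity instance. Let $G$ consist of two terminals $s,t$ joined by one direct edge and by $\sqrt{m}$ internally vertex-disjoint paths of length $\sqrt{m}$ each, so that $|E|=\Theta(m)$, and take the unit demand $d=\kron_s-\kron_t$. Each long path has resistance $\sqrt{m}$, so the $\sqrt{m}$ paths in parallel have combined resistance $1$; together with the direct edge this gives effective $s$-$t$ resistance $1/2$. Hence the direct edge carries $1/2$ unit of current and electric routing has congestion $1/2$, whereas the optimal single-commodity routing splits the unit demand across all $\sqrt{m}+1$ available routes to achieve congestion $\Theta(1/\sqrt{m})$; the resulting ratio is $\Theta(\sqrt{m})$.

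The principal technical step is the operator-theoretic bookkeeping in the upper bound: once one identifies the multi-commodity electric congestion with $\max_e\sum_{e'}|P_{e,e'}|$ and recognizes that this $\ell_1$-row sum of a symmetric projection is controlled by its diagonal, the bound $\sqrt{m}$ follows from the trivial fact that effective edge resistances are at most $1$. The lower bound is a routine series/parallel computation on a very simple graph and is not the main obstacle.
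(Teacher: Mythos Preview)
Your proposal is correct and follows essentially the same approach as the paper: the upper bound identifies the electric routing with the projection $P=B\Li B^*$, bounds the congestion ratio by $\|P\|_{1\to 1}$, and controls the latter via Cauchy--Schwarz together with $P^2=P$ (equivalently, $P_{e,e}\le 1$); the lower bound uses the identical ``one short edge plus $\sqrt m$ long parallel paths'' construction. Your direct use of $\el(d_\tau)=Pf^*_\tau$ for an arbitrary optimal flow is a slightly more streamlined route to the inequality $\eta_\el\le\|P\|_{1\to 1}$ than the paper's separate worst-case-demand argument (Theorem~\ref{theo:eta:ub}), but the substance is the same.
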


\paragraph{Electric routing}
Let $W=\vdiag(\dots,w_e,\dots)\in\matha{R}^{E\times E}$ be the edge weights
matrix.  We appeal to a known connection between graph Laplacians and electric
current~\cite{snell, spielman-lecture}. Graph edges are viewed as wires of
resistance $w_e^{-1}$ and vertices are viewed as connection points.  If
$\pot\in\matha{R}^V$ is a vector of vertex potentials then, by Ohm's law, the
\textit{electric flow} (over the edge set) 
is given by $f=WB\pot$ and the corresponding demand is
$B^*f=L\pot$ where the \textit{(un-normalized) Laplacian} $L$ is defined as
$L=B^*WB$. Central to the present work will be the
vertex potentials that induce a desired $(s,t)$-flow, given by
$\pot^{[s,t]}=\Li(\kron_s-\kron_t)$, where $\Li$ is the pseudo-inverse of $L$. 
Thus, the electric flow corresponding to the
demand pair $(s,t)$ is $WB\pot^{[s,t]}=WB\Li(\kron_s-\kron_t)$.
We define the \textit{electric routing} operator as
\begin{align}
\el(d) = WB\Li d \label{eq:el}
\end{align}
The vector $\el(\kron_s-\kron_t)\in\matha{R}^E$ encodes a unit flow from 
$s$ to $t$ supported on $G$, where the flow along an edge $(u,v)$ is given by
$\llbracket st,uv \rrbracket
:=\el(\kron_s-\kron_t)_{u\to
v}=(\pot^{[s,t]}_u-\pot^{[s,t]}_v)w_{u,v}$.\footnote{The bilinear form 
$\llbracket st,uv \rrbracket = \kron_{s,t} B\Li B^* \kron_{u,v}$
acts like a ``representation'' of $G$, hence the custom bracket notation.}
(Our convention is that current flows towards lower potential.)
When routing an indivisible message (an IP packet e.g.), we can view the
unit flow $\el(\kron_s-\kron_t)$ as a distribution over $(s,t)$-paths 
defined recursively as follows: Start at $s$.
At any vertex $u$, \textit{forward} the message
along an edge with positive flow, with probability proportional to the
edge flow value. Stop when $t$ is reached. 
This rule defines the \textit{electric walk} from $s$ to $t$.
It is immediate that the flow value over an edge $(u,v)$ equals the 
probability that the electric walk traverses that edge.

Let ``$\sim$'' denote the vertex adjacency relation of $G$. In order to make
a (divisible or indivisible) forwarding decision, a vertex $u$ must be able to
compute $\llbracket st,uv \rrbracket$ for all neighbors 
$v\sim u$ and all pairs $(s,t)\in\binom{V}{2}$. We address this next.

\paragraph{Representation}
In order to compute $\llbracket st,uv \rrbracket$ (for all $s,t \in V$ and 
all $v\sim u$) at $u$, it suffices that $u$ stores the vector 
$\pot^{[w]}:=\Li \kron_w$, for all $w\in\{w:w\sim u\}\cup\{u\}$. This is
apparent from writing
\begin{align}
\llbracket st,uv \rrbracket
  = (\kron_u-\kron_v)\Li(\kron_s-\kron_t) 
  = (\pot^{[u]}-\pot^{[v]})^*(\kron_s-\kron_t),
  \label{fwd:flow}
\end{align}
where we have (crucially) used the fact that $\Li$ is symmetric. The
vectors $\pot^{[w]}$ stored at $u$ comprise the \textit{(routing) table}
of $u$, which consists of $\deg(u)\cdot n$ real numbers.
Thus the per-vertex table sizes of our scheme grow linearly with the vertex
degree \textendash\ a property we call \textit{fair} representation. 
It seems that fair representation is key for routing in heterogenous sytems
consisting of devices with varying capabilities.

Equation~(\ref{fwd:flow}),
written as $\llbracket st,uv \rrbracket=(\kron_s-\kron_t)^*(\pot^{[u]}-\pot^{[v]})$,
shows that in order to compute $\llbracket st,uv \rrbracket$ at $u$, it suffices
to know the \textit{indices} of $s$ and $t$ (in the $\pot^{[w]}$'s). 
These indices could be represented by $O(\ln n)$-bit opaque vertex ID's 
and could be carried in the message headers. Routing schemes that support
opaque vertex addressing are called \textit{name-independent}. Name
independence allows for vertex name persistence across time (i.e. changing
graph topology) and across multiple co-existing routing schemes.

\paragraph{Computation}
We use an idealized computational model to facilitate this exposition.  The
vertices of $G$ are viewed as processors, synchronized by a global step
counter. During a time step, pairs of processors can exchange messages of
arbitrary (finite) size as long as they are connected by an edge.
We describe an algorithm for computing
approximations $\tilde{\pot}^{[v]}$ to 
all $\pot^{[v]}$ in $O(\ln n/\lambda)$ steps, where $\lambda$ is
the Fiedler eigenvalue of $G$ (the smallest non-zero eigenvalue of $L$). 
If $G$ is an expander, then $\lambda=O(1)$.
At every step the algorithm sends messages consisting of
$O(n)$ real numbers across every edge and performs $O(\deg(v)\cdot n)$ 
arithmetic operations on each processor $v$.
Using standard techniques, this algorithm can be
converted into a relatively easy-to-implement 
asynchronous one. (We omit this detail from here.)
It is assumed that no graph changes occur during the computation of 
vertex tables. 

A vector $\zeta\in\matha{R}^V$ is \textit{distributed} if $\zeta_v$ is
stored at $v$, for all $v$. A matrix $M\in\matha{R}^{V\times V}$ 
is \textit{local} (with respect to $G$)
if $M_{u,v}\neq 0$ implies $u\sim v$ or $u = v$. It is straightforward that
if $\zeta$ is distributed and $M$ is local, 
then $M\zeta$ can be computed in a single step, resulting in a new distributed
vector. Extending this technique shows that for any polynomial $q(\cdot)$,
the vector $q(M)\zeta$ can be computed in $\deg(q)$ steps.

The Power Method gives us a matrix polynomial $q(\cdot)$ of degree $O(\ln n /
\lambda)$ such that $q(L)$ is a ``good'' approximation of $\Li$.  We compute
the distributed vectors $\zeta^{[w]}:=q(L)\kron_w$, for all $w$, in parallel.
As a result, each vertex $u$ obtains
$\tilde{\pot}^{[u]}=(\zeta^{[1]}_u,\dots,\zeta^{[n]}_u)$, which approximates
$\pot^{[u]}$ according to Theorem~\ref{coro:power} and the symmetry of $L$.
In one last step, every processor $u$ sends 
$\tilde{\pot}^{[u]}$ to its neighbors. The approximation error $n^{-5}$ is
chosen to suffice (in accordance with Corollary~\ref{coro:approx}) 
as discussed next.

\begin{theorem}\label{coro:power}
Let $\lambda$ be the Fiedler (smallest non-zero) eigenvalue of $G$'s Laplacian
$L$, and let $G$ be of bounded degree $d_{\max}$. Then
$\|\zeta^{[v]}-\pot^{[v]}\|_2\le n^{-5}$, where 
$\zeta^{[v]} = (2d_{\max})^{-1} \sum_{\omega=0}^k M^\omega \kron_v$ 
and $M=I-L/2d_{\max}$, as long as $k \ge \Omega(\lambda^{-1}\cdot \ln n)$.
\end{theorem}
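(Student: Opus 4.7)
The plan is to identify $\zeta^{[v]}$ as a truncated Neumann series approximating $\Li\kron_v$ and then bound the tail using the spectral gap of $L$. Because $G$ has maximum degree $d_{\max}$, the Laplacian satisfies $0 \preceq L \preceq 2d_{\max}\cdot I$, so $M = I - L/(2d_{\max})$ is symmetric with spectrum contained in $[0,1]$. The eigenvalue $1$ of $M$ corresponds precisely to the one-dimensional kernel $\mathrm{span}(\mathbf{1})$ of $L$, while the spectral radius of $M$ restricted to $\mathbf{1}^{\perp}$ is at most $\rho := 1 - \lambda/(2d_{\max}) < 1$.

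Decompose $\kron_v = n^{-1}\mathbf{1} + y_v$ with $y_v\in\mathbf{1}^{\perp}$, $\|y_v\|_2\le 1$. Since $\Li$ annihilates $\mathbf{1}$, one has $\pot^{[v]} = \Li y_v$, and on $\mathbf{1}^{\perp}$ the identity $\Li = (2d_{\max})^{-1}(I-M)^{-1}$ expands as the absolutely convergent Neumann series $\pot^{[v]} = (2d_{\max})^{-1}\sum_{\omega\ge 0} M^{\omega} y_v$. Noting that $M^{\omega}\mathbf{1}=\mathbf{1}$ for every $\omega$, the difference $\zeta^{[v]}-\pot^{[v]}$ splits as a kernel-direction piece (a constant vector which is immaterial for the flow formula~(\ref{fwd:flow}), since only potential differences enter, and which can also be removed by subtracting $n^{-1}\mathbf{1}$ before applying the polynomial) plus the tail $(2d_{\max})^{-1}\sum_{\omega > k} M^{\omega} y_v$. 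Using the spectral-norm bound of $\rho$ on $\mathbf{1}^\perp$ and geometric summation,
\begin{align*}
\Bigl\|(2d_{\max})^{-1}\sum_{\omega > k} M^{\omega} y_v\Bigr\|_2 \le \frac{1}{2d_{\max}}\cdot\frac{\rho^{k+1}}{1-\rho}\|y_v\|_2 \le \frac{\rho^{k+1}}{\lambda}.
\end{align*}

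Finally, from $\rho \le \exp(-\lambda/(2d_{\max}))$, demanding $\rho^{k+1}/\lambda \le n^{-5}$ is secured once $k \ge (2d_{\max}/\lambda)(5\ln n + \ln\lambda^{-1})$; under the standing assumption $d_{\max}=O(1)$ (and $\lambda\ge 1/\mathrm{poly}(n)$, which is automatic since $\lambda$ is an eigenvalue of an integer-entry matrix of bounded norm), this is exactly $k = \Omega(\lambda^{-1}\ln n)$ as claimed.

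The only delicate point in the argument is the separation of the kernel of $L$ from $\mathbf{1}^{\perp}$: $M$ is not a contraction on the full space, and $\sum_{\omega}M^{\omega}\kron_v$ diverges along $\mathbf{1}$. This is the reason for the decomposition $\kron_v = n^{-1}\mathbf{1}+y_v$ (or, equivalently, for viewing the stored potentials modulo an overall additive constant). Once this is handled, the remainder of the proof is a direct geometric-series estimate controlled by the Fiedler gap $\lambda$.
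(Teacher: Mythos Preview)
Your proof is correct and follows essentially the same route as the paper: both recognize $\zeta^{[v]}$ as a truncated Neumann series for $(I-M)^{-1}$ applied to $\kron_v$, bound the tail geometrically using the spectral radius $\rho = 1-\lambda/(2d_{\max})$ on $\mathbf{1}^\perp$, and solve for the degree $k$. The paper is in fact less careful on exactly the point you single out at the end: it writes $N^\dag = \sum_{\omega\ge 0}(I-N)^\omega$ and bounds $\|(I-N)^\omega\|_{2\to 2}$ by $(1-\kappa^{-1})^\omega$, both of which are only valid after restricting to $\mathbf{1}^\perp$, so your explicit decomposition $\kron_v = n^{-1}\mathbf{1}+y_v$ and the remark that the $\mathbf{1}$-component is irrelevant for the potential differences in~\eqref{fwd:flow} make the argument strictly cleaner.
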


\paragraph{Robustness and latency}

In order to get a handle on the analysis of routing in an ever-changing network
we use a simplifying assumption: the graph does not change during the
computation phase while it can change afterwards, during the routing phase.
This assumption is informally justified because the computation phase in
expander graphs (which we consider to be the typical case) is relatively fast,
it takes $O(\ln n)$ steps. The routing phase, on the other hand, should be as
``long'' as possible before we have to recompute the scheme.  Roughly, a
routing scheme can be used until the graph changes so much from its shape when
the scheme was computed that both the probability of reaching destinations and
the congestion properties of the scheme deteriorate with respect to the
new shape of the graph. We quantify the robustness of electric routing against
edge removals in the following two theorems:

\begin{theorem}\label{theo:robust1}
Let $G$ be an unweighted graph with Fiedler eigenvalue $\lambda=\Theta(1)$
and maximum degree $d_{\max}$,
and let $f^{[s,t]}$ denote the unit electric flow between $s$ and $t$.
For any $0 < p \le 1$, let $Q_p =\{e\in E: |f^{[s,t]}_e|\ge p\}$ 
be the set of edges carrying more than $p$ flow.
Then, 
$|Q_p|\le \min\{ 2 / (\lambda p^2), 2d_{\max}\|\Li\|_{1\to 1}/p\}$.
\end{theorem}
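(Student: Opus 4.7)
I plan to establish the two upper bounds on $|Q_p|$ independently by two short applications of standard operator-theoretic facts, one in $\ell_2$ and one in $\ell_1$, and then take their minimum.

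For the $\ell_2$ bound $|Q_p| \le 2/(\lambda p^2)$, I would begin from the pigeonhole-style inequality
\begin{align*}
p^2 \cdot |Q_p| \;\le\; \sum_{e\in Q_p} (f^{[s,t]}_e)^2 \;\le\; \|f^{[s,t]}\|_2^2,
\end{align*}
and identify the right-hand side as the \emph{energy} of the unit electric $(s,t)$ flow. Using $f^{[s,t]} = B\Li(\kron_s-\kron_t)$ (since $G$ is unweighted, $W=I$) and $B^*B = L$, a direct computation gives $\|f^{[s,t]}\|_2^2 = (\kron_s-\kron_t)^*\Li(\kron_s-\kron_t)$, i.e.\ the effective $(s,t)$ resistance. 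Since $\kron_s-\kron_t$ is orthogonal to $\ker L$ (the constants), and the nonzero eigenvalues of $\Li$ are all at most $\lambda^{-1}$, this quadratic form is bounded by $\lambda^{-1}\|\kron_s-\kron_t\|_2^2 = 2/\lambda$. Rearranging yields the first bound.

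For the $\ell_1$ bound $|Q_p| \le 2d_{\max}\|\Li\|_{1\to 1}/p$, I would use the analogous inequality
\begin{align*}
p\cdot |Q_p| \;\le\; \sum_{e\in Q_p} |f^{[s,t]}_e| \;\le\; \|f^{[s,t]}\|_1 \;=\; \|B\Li(\kron_s-\kron_t)\|_1,
\end{align*}
and then apply sub-multiplicativity of the $\ell_1\to\ell_1$ operator norm:
$\|B\Li(\kron_s-\kron_t)\|_1 \le \|B\|_{1\to 1}\cdot \|\Li\|_{1\to 1}\cdot \|\kron_s-\kron_t\|_1$.
Each column of $B$ is supported on the edges incident to the corresponding vertex with $\pm 1$ entries, so its $\ell_1$ norm is that vertex's degree; hence $\|B\|_{1\to 1} \le d_{\max}$. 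Combining with $\|\kron_s-\kron_t\|_1 = 2$ gives the second bound.

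There is no real technical obstacle in either argument: both are essentially one-line factorizations once the correct operator norm is invoked. The nontrivial content lives entirely in the prior bound on $\|\Li\|_{1\to 1}$ established elsewhere in the paper, which is used here as a black box. Conceptually the two estimates play complementary roles: the $\ell_2$ bound is effective at small thresholds $p$ (where many edges carry a little flow), while the $\ell_1$ bound, driven by the paper's main quantity $\|\Li\|_{1\to 1}$, dominates at larger $p$.
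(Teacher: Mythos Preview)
Your $\ell_1$ argument for the second bound is exactly the paper's: both write $p\,|Q_p|\le \|B\Li(\kron_s-\kron_t)\|_1\le 2d_{\max}\|\Li\|_{1\to 1}$.

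For the first bound, your approach is correct but genuinely different from the paper's. You go straight through the energy identity $\|f^{[s,t]}\|_2^2=(\kron_s-\kron_t)^*\Li(\kron_s-\kron_t)$, recognize this as the effective resistance, and bound it spectrally by $2/\lambda$. The paper instead uses a random-cut argument that showcases its ``concurrent flow cutting'' machinery: it embeds $V$ on the line via the potential $\zeta(v)=\kron_v^*\Li(\kron_s-\kron_t)$, picks a uniformly random threshold $c\in[\zeta_{\min},\zeta_{\max}]$, and observes that the total $Q_p$-flow crossing the induced cut is always at most $1$; taking expectations yields $\sum_{e\in Q_p} p_e^2/(\zeta_{\max}-\zeta_{\min})\le 1$, and then $\zeta_{\max}-\zeta_{\min}\le 2/\lambda$ finishes. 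Both routes ultimately rest on the same spectral fact and reach the same inequality $\sum_{e\in Q_p} p_e^2\le 2/\lambda$. Your version is shorter and more self-contained; the paper's version is chosen deliberately (as noted in the text) to illustrate that the flow-cutting viewpoint, central to the $\|\Li\|_{1\to 1}$ bound, also yields robustness statements with no extra work.
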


Note that part one of this theorem, i.e. $|Q_p|\le 2/(\lambda p^2)$,
distinguishes electric routing from simple schemes like shortest-path routing.
The next theorem studies how edge removals affect demands when ``the entire
graph is in use:''

\begin{theorem}\label{theo:robust2}
Let graph $G$ be unweighted of bounded degree $d_{\max}$ and 
vertex expansion $\alpha$.
Let $f$ be a routing of the uniform multi-commodity demand set over $V$ 
(single unit of demand between every pair of vertices),
produced by an $\eta$-competitive oblivious routing scheme. %
Then,
for any $0\le x\le 1$, removing a $x$-fraction of edges from $G$
removes at most 
$x \cdot (\eta \cdot d_{\max}\cdot \ln n\cdot \alpha^{-1})$-fraction 
of flow from $f$.
\end{theorem}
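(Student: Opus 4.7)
The plan is to bound the fraction of destroyed flow by the worst-case ratio $|R|\cdot\|f\|_\cn / F_{\mathrm{tot}}$, where $R\subseteq E$ is the removed edge set, $\|f\|_\cn$ is the congestion of $f$, and $F_{\mathrm{tot}}:=\sum_\tau\|f_\tau\|_1$ is the total $\ell_1$-mass of the multi-commodity flow. Since edges have unit capacity, definition~(\ref{eq:congnorm}) says each edge supports at most $\|f\|_\cn$ units of total flow summed across commodities, so the flow destroyed upon removing $R$ is at most $|R|\cdot\|f\|_\cn$. The theorem therefore reduces to bounding this ratio by the claimed quantity up to a universal constant absorbed into $\ln n$.

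For the denominator I will use flow conservation: the single-commodity flow between any two distinct vertices $s,t$ has $\ell_1$-norm at least the graph distance between them, hence at least $1$; summing over the $\binom{n}{2}$ unit demands gives $F_{\mathrm{tot}}\ge \binom{n}{2}=\Theta(n^2)$. The numerator factors are $|R|=xm\le xnd_{\max}/2$ (this is where $d_{\max}$ enters) and $\|f\|_\cn\le\eta\cdot C_{\mathrm{opt}}$, where $C_{\mathrm{opt}}$ is the minimum achievable congestion of the uniform demand, supplied by the $\eta$-competitiveness hypothesis.

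The key remaining step is the estimate $C_{\mathrm{opt}}=O(n\ln n/\alpha)$. I would obtain it via the Leighton--Rao multicommodity max-flow/min-cut theorem, which yields $C_{\mathrm{opt}}=O(\ln n)\cdot\max_{S\subset V}|S|\,|S^\cmpl|/|E(S,S^\cmpl)|$, combined with the expansion hypothesis $|E(S,S^\cmpl)|\ge\alpha\min\{|S|,|S^\cmpl|\}$, which bounds the sparsity by $n/\alpha$. Assembling the three estimates gives $|R|\cdot\|f\|_\cn/F_{\mathrm{tot}}=O(x\eta d_{\max}\ln n/\alpha)$, matching the claim.

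The main obstacle is precisely this last step, the uniform-demand congestion bound in an $\alpha$-expander; Leighton--Rao is the slickest route, but the same $O(n\ln n/\alpha)$ estimate can be produced internally by applying Theorem~\ref{theo:main} to an explicit routing of the uniform demand set (for instance, electric routing itself), should one prefer to avoid the external citation.
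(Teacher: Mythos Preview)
Your proof is correct and follows essentially the same route as the paper: both invoke the Leighton--Rao multicommodity max-flow/min-cut gap to bound the optimal congestion of the uniform demand set by $O(n\ln n/\alpha)$ (the paper phrases this dually as a throughput bound $\theta\ge\Omega(\alpha/(n\ln n))$ and normalizes so that the optimal routing has congestion $1$), then combine $\eta$-competitiveness with $|R|=xm\le x d_{\max}n$. One small caveat on your closing remark: the proposed alternative of avoiding Leighton--Rao via Theorem~\ref{theo:main} does not work as stated, since that theorem bounds only the competitive ratio $\eta_\el$, not $C_{\mathrm{opt}}$ itself, so the external max-flow/min-cut input is genuinely needed here.
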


The expected number of edges traversed between source and sink
reflects the latency of a routing.
We establish (Proven in Appendix~\ref{sec:latency-proof}):
\begin{theorem}\label{theo:stretch}
The latency of every electric walk on an undirected 
graph of bounded degree $d_{\max}$ 
and vertex expansion $\alpha$
is at most $O(\min\{m^{1/2},d_{\max}\alpha^{-2}\ln n \})$.
\end{theorem}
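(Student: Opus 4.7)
Because the electric walk from $s$ to $t$ crosses each edge $e$ with probability exactly $|f^{[s,t]}_e|$, its expected length equals $\|f^{[s,t]}\|_1 = \sum_e|f^{[s,t]}_e|$; the theorem therefore reduces to upper-bounding this $\ell_1$ norm by each of the two expressions in the minimum.

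For the $O(d_{\max}\alpha^{-2}\ln n)$ branch, I would factor $f^{[s,t]} = B\Li(\kron_s-\kron_t)$ (unit weights) and apply submultiplicativity of the $\ell_1\to\ell_1$ operator norm:
\begin{equation*}
\|f^{[s,t]}\|_1 \;\le\; \|B\|_{1\to 1}\cdot \|\Li\|_{1\to 1}\cdot \|\kron_s-\kron_t\|_1
\;\le\; 2\, d_{\max}\,\|\Li\|_{1\to 1},
\end{equation*}
where $\|B\|_{1\to 1}\le d_{\max}$ because each column of $B$ has at most $d_{\max}$ unit entries. Substituting the paper's $\ell_1\to\ell_1$ estimate on $\Li$ (the technical core underpinning Theorem~\ref{theo:main}, of order $\alpha^{-2}\ln n$ in the bounded-degree regime) closes this branch.

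For the $O(m^{1/2})$ branch, I would use the layer-cake identity $\|f\|_1 = \int_0^\infty |Q_p|\,dp$, where $Q_p = \{e:|f_e|\ge p\}$ is exactly the set controlled by Theorem~\ref{theo:robust1}. Combining the energy estimate $|Q_p|\le 2/(\lambda p^2)$ with the trivial bound $|Q_p|\le m$ and splitting at $p^\ast = \sqrt{2/(\lambda m)}$, both pieces evaluate to $\sqrt{2m/\lambda}$; Cheeger's inequality $\lambda\ge \alpha^2/(2d_{\max})$ then shows that whenever the $\lambda^{-1/2}$ factor is truly large, the other branch of the minimum is already the binding one, leaving an unconditional $O(m^{1/2})$.

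\emph{Main obstacle.} Carrying out the second split cleanly: the trivial bound $|Q_p|\le m$ must be the active estimate near $p=0$ (where the spectral bound diverges), and one may need to invoke the Laplacian estimate $|Q_p|\le 2d_{\max}\|\Li\|_{1\to 1}/p$ on an intermediate range of $p$ in order to strip off the remaining $\lambda$-dependent logarithmic factor and recover the pure $m^{1/2}$ form; choosing the break-points and verifying that the $\lambda$- and $\|\Li\|_{1\to 1}$-dependences collapse is the delicate part, while all the underlying inequalities are already in hand.
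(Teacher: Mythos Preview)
Your expansion-based branch matches the paper: both factor $\|B\Li(\kron_s-\kron_t)\|_1$ through $\|B\|_{1\to 1}\le d_{\max}$ to reduce to $\|\Li\|_{1\to 1}$ and then invoke Theorem~\ref{theo:ve:ub}.

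The $m^{1/2}$ branch, however, has a genuine gap. Your layer-cake computation with $|Q_p|\le\min\{m,\,2/(\lambda p^2)\}$ correctly gives $\|f^{[s,t]}\|_1 \le O(\sqrt{m/\lambda})$, but the residual $\lambda^{-1/2}$ cannot be absorbed by the case analysis you sketch. When $\lambda$ is small, Cheeger forces $\alpha^2\le 2d_{\max}\lambda$, hence $d_{\max}\alpha^{-2}\ln n \ge (\ln n)/(2\lambda)$: the second branch of the minimum becomes \emph{large}, not binding, and you are still on the hook for an unconditional $O(m^{1/2})$ bound that neither $\sqrt{m/\lambda}$ nor $d_{\max}\alpha^{-2}\ln n$ supplies. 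Bringing in the alternative estimate $|Q_p|\le 2d_{\max}\|\Li\|_{1\to 1}/p$ does not rescue this either: capped by $|Q_p|\le m$ for small $p$ and by $|Q_p|=0$ for $p>1$, it integrates to $O\bigl(d_{\max}\|\Li\|_{1\to 1}\ln m\bigr)$, which is the wrong shape. Theorem~\ref{theo:robust1} alone simply does not see the structure needed for a $\lambda$-free bound.

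The paper's route for this branch is different and much shorter: it identifies the latency with $\|B\Li B^*\|_{1\to 1}=\|\Pi\|_{1\to 1}$ and then uses that $\Pi$ is an orthogonal projection (Lemma~\ref{lem:spiel}), so $\|\Pi\|_{2\to 2}=1$ and $\|\Pi\kron_e\|_1\le m^{1/2}\|\Pi\kron_e\|_2\le m^{1/2}$ by Cauchy--Schwarz. This gives $m^{1/2}$ in one line with no dependence on $\lambda$ or $\alpha$. The missing ingredient in your plan is precisely this projection property of $\Pi$; the level-set bounds you are using are downstream of the same potential picture but are strictly weaker for this purpose.
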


\paragraph{Analysis}

The main hurdle is Theorem~\ref{theo:main}, which we attack in two steps.
First, we show that any linear routing scheme $R$ (i.e. scheme for which
the operator $R:\matha{R}^V\to\matha{R}^E$ is linear) has a distinct worst-case
demand set, known as \textit{uniform demands}, consisting of a unit demand
between the endpoints of every edge of $G$. Combinging this with the formulaic
expression for electric flow~(\ref{eq:el}) gives us an operator-based geometric
bound for $\eta_\el$, which in the case
of a bounded degree graph is simply $\eta_\el \le O(\|\Li\|_{1\to 1})$ 
where the operator norm $\|\cdot\|_{1\to 1}$ is defined by
$\|A\|_{1\to 1} := \sup_{x\neq 0} \|Ax\|_1 / \|x\|_1$.
This is shown in Theorem~\ref{theo:eta:ub}. Second, we give a rounding-type
argument that establishes the desired bound on $\|\Li\|_{1\to 1}$. This
argument relies on a novel technique we dub \textit{concurrent flow cutting}
and is our key technical contribution. This is done in Theorem~\ref{theo:ve:ub}.
This concludes the analysis of the congestion properties of electric flow.

The computational procedure for the vertex potentials $\pot^{[v]}$'s (above)
only affords us approximate versions $\tilde{\pot}^{[v]}$ with $\ell_2$ error
guarantees.  We need to ensure that, when using these in place of the exact
ones, all properties of the exact electric flow are preserved.  For this
purpose, it is convenient to view the electric flow as a
distribution over paths (i.e. the electric walk, defined above) 
and measure the total variation distance between the walks induced
by exact and approximate vertex potentials.
This is achieved in Theorem~\ref{theo:approx} and Corollary~\ref{coro:approx}. 
It is then easy to verify that any two multi-commodity flows, whose respective
individual flows have sufficiently small variation distance, have essentially
identical congestion and robustness properties.

\paragraph{Related work}
Two bodies of prior literature concern themselves with oblivious routing.
One focuses on approximating the shortest-path 
metric~\cite{tz1, tz2, stretch-tradeoff, compact-stretch}, the
other focuses on approximating the minimal congestion universally across
all possible demand sets~\cite{cong-opt, cong-avg}. The algorithms
in these works are essentially best possible in terms of competitive
characteristics, however they are not distributed and do not address
(competitive) performance in the presence of churn.
It is not obvious how to provide efficient distributed variants for 
these routing schemes that are additionally resistant to churn. 
The primary reason for this are the algorithmic
primitives used.  Common techniques are landmark (a.k.a. beacon)
selection~\cite{tz1, tz2}, hierarchical tree-decomposition or
tree-packings~\cite{cong-opt}. These approaches place disproportiantely
larger importance on ``root'' nodes, which makes the resulting schemes 
vulnerable to individual failures. Furthermore, these algorithms
require more than (quasi-)linear time (in the centralized model), which
translates to prohibitively slow distributed times.

We are aware of one other work in the theoretical literature 
by Goyal, et al.~\cite{goyal}
that relates to efficient and churn-tolerant distributed routing.
Motivated by the proliferation of routing schemes for trees, they
show that expanders are well-approximated by the union of $O(1)$
spanning trees. However, they do not provide a routing scheme, since
routing over unions of trees is not understood.

Concurrently with this paper, Lawler, et al.~\cite{lp-obliv} study 
just the congestion of electric flow in isolation from other considerations like
computation, representation or tolerance to churn. 
Their main result is a variant of
our graph expansion-based bound on $\|\Li\|_{1\to 1}$, given by
Theorem~\ref{theo:ve:ub}. Our approaches, however, are different.
We use a geometric approach, compared to a less direct
probabilistic one. Our proof exposes structural information about 
the electric flow, which makes
the fault-tolerance of electric routing against edge removal an easy
consequence. This is not the case for the proofs found in~\cite{lp-obliv}.

\paragraph{Organization}

Section~\ref{sec:prelim} covers definitions and preliminaries.
Section~\ref{sec:cong} relates the congestion of electric flow to
$\|\Li\|_{1\to 1}$. Section~\ref{sec:l1} obtains a bound on $\|\Li\|_{1\to 1}$
by introducing the concurrent flow cutting method.  
Section~\ref{sec:walk} relates the electric walk to the electric flow and
proves (i) stability against perturbation of vertex potentials, (ii) latency
bounds and (iii) robustness theorems. 
Section~\ref{sec:concl} contains remarks and open problems.

\section{Preliminaries}
\label{sec:prelim}


The Spectral Theory of graphs is comprehensively covered in~\cite{chung}.
The object of interest is a graph $G=(V,E)$ (with $V=[n]$ and $|E|=m$)
undirected, positively edge-weighted by 
$w_{u,v}\ge 0$, and not necessarily simple. Whenever we use unweighted
graphs, we have $w_{u,v}=1$ if $u\sim v$ and $w_{u,v}=0$ otherwise.
The Laplacian is positive semi-definite, $L\succeq 0$, and thus can
be diagonalized as $L=U\Lambda U^*$, where $U\in \matha{R}^{n\times n}$ is
unitary and $\Lambda=\vdiag(\lambda_1,\dots,\lambda_n)$. By convention, we
write $0\le\lambda_1\le\lambda_2\le\cdots\le\lambda_n$. For every $G$,
$\lambda_n\le 2D$, where $D$ is the maximum degree. 
When $G$ is connected, $\vker(L) = \vi$, and so
$L\Li=\Li L = \pi_{\bot\vi}$ where $\pi_W$ denotes projection onto $W$
and $\Li$ denotes the pseudo-inverse of $L$.
On occasion we use 
$\lambda_{\min}:=\lambda_2$ and $\lambda_{\max} := \lambda_n$.
The \textit{vertex expansion} of an unweighted $G$ is defined as 
$\alpha:=\min_{S\subseteq V}\frac{|E(S,S^\cmpl)|}{\min\{|S|,|S^\cmpl|\}}$.

\section{The geometry of congestion}
\label{sec:cong}

Recall that given a multi-commodity demand, electric routing
assigns to each demand the corresponding electric flow in $G$,
which we express~(\ref{eq:el}) in operator form
$\el(\oplus_\tau d_\tau) := WB\Li(\oplus_\tau d_\tau)$.
Electric routing is oblivious, since 
$\el(\oplus_\tau d_\tau)=\oplus_\tau\el(d_\tau)$ ensures that
individual demands are routed independently from each other.
The first key step in our analysis, Theorem~\ref{theo:eta:ub}, 
entails bounding $\eta_\el$ 
by the $\|\cdot\|_{1\to 1}$ matrix norm of a certain natural graph
operator on $G$. This step hinges on the observation that
all linear routing schemes have an easy-to-express worst-case demand set:

\begin{theorem} \label{theo:eta:ub}
For every undirected, weighted graph $G$, let $\Pi=W^{1/2}B\Li B^*W^{1/2}$,
then
\begin{align}
\eta_\el \le \|W^{1/2}\Pi W^{-1/2}\|_{1\rightarrow 1}.
    \label{ineq:eta:ub}
\end{align}
\end{theorem}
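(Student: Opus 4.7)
The plan is to reparameterize the supremum in the definition of $\eta_\el$ by the optimal flow $F$ and then extract a clean operator norm via submultiplicativity. Given any valid demand set $D$ and any flow $F$ routing it, the constraint $B^*F = D$ lets me substitute $D = B^*F$ into the electric routing operator, so that $\el(D) = WB\Li B^* F$. Unfolding the congestion norm of~(\ref{eq:congnorm}) and using $W^* = W$ and $\Li^* = \Li$,
\begin{align*}
\el(D)^* W^{-1} \;=\; F^* B \Li B^* W \cdot W^{-1} \;=\; F^* B\Li B^* \;=\; (F^*W^{-1})\,(WB\Li B^*).
\end{align*}

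Next I would apply submultiplicativity of the induced matrix norm $\|\cdot\|_{1\to 1}$, obtaining
\begin{align*}
\|\el(D)\|_\cn \;\le\; \|F^*W^{-1}\|_{1\to 1}\cdot\|WB\Li B^*\|_{1\to 1} \;=\; \|F\|_\cn \cdot \|W^{1/2}\Pi W^{-1/2}\|_{1\to 1},
\end{align*}
where the last equality is the algebraic identity $W^{1/2}\Pi W^{-1/2} = W^{1/2}(W^{1/2}B\Li B^*W^{1/2})W^{-1/2} = WB\Li B^*$. Dividing by $\|F\|_\cn$ and taking the supremum over all valid pairs $(D,F)$ produces the inequality (\ref{ineq:eta:ub}).

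The only potentially lossy step is the submultiplicativity application, so the main thing worth checking is that essentially nothing is given up there. Equality is attained when the left factor $F^*W^{-1}$ equals the identity, i.e.\ when $F = W$; the corresponding demand set $D = B^*W$ places a $w_e$-unit demand across each edge $e$, which is routed trivially by pushing those $w_e$ units directly along $e$ with unit congestion. This is precisely the "uniform demands" worst case advertised just before the theorem statement, so the operator-norm bound is tight and the proof is complete.
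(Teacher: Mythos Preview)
Your argument is correct and in spirit identical to the paper's, but it is packaged more cleanly. The paper proves the bound by first restricting to demand sets routable with unit congestion, then decomposing each $d_\tau = \sum_e f_{\tau,e} B_e$ and walking through a chain of $\|\cdot\|_\cn$-norm inequalities to show that the worst case is $\oplus_e w_e B_e$, and finally evaluating $\|\el(\oplus_e w_e B_e)\|_\cn = \|W^{1/2}\Pi W^{-1/2}\|_{1\to 1}$. Your matrix factorization $\el(D)^*W^{-1} = (F^*W^{-1})(WB\Li B^*)$ followed by submultiplicativity of $\|\cdot\|_{1\to 1}$ collapses all of step~(i) into one line; the paper's chain of inequalities is, in effect, a coordinate-wise verification of exactly that submultiplicativity. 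Your tightness remark (taking $F=W$, i.e.\ the uniform edge demands) recovers the paper's explicit worst-case demand set, so the two arguments carry the same information. The upside of the paper's version is that it makes the ``uniform demands are worst case'' statement visible as an intermediate claim holding for any linear scheme; the upside of yours is brevity.
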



\begin{proof}[Proof of Theorem~\ref{theo:eta:ub}]
It is sufficient to consider demand sets that can be routed
in $G$ with unit congestion, since both electric and optimal routing
scale linearly with scaling the entire demand set.
Let $\oplus_\tau d_\tau$ be any demand set,
which can be (optimally) routed in $G$ with unit congestion
via the multi-commodity flow $\oplus_\tau f_\tau$.
Thus, $d_\tau=\sum_e f_{\tau,e}B_e$, for all $\tau$.

The proof involves two steps:
\begin{align*}
\big\|\el(\oplus_\tau d_\tau)\big\|_\cn
    \overset{\text{(i)}}{\le} 
      \big\|\el(\oplus_e w_e B_e)\big\|_\cn
    \overset{\text{(ii)}}{=} 
      \big\|W^{1/2}\Pi W^{-1/2}\big\|_{1\rightarrow 1} 
\end{align*}

Step (i) shows that congestion incurred when routing $\oplus_\tau d_\tau$
is no more than that incurred when routing $G$'s edges, viewed as demands, 
through $G$:
\begin{align*}
\big\|\el(\oplus_{\tau}d_\tau)\big\|_\cn
  &= \big\|\oplus_\tau \el(d_\tau)\big\|_\cn \tag{i} \\
  &= \big\|\oplus_\tau\el\big(\sum_e f_{\tau,e}B_e\big)\big\|_\cn
  &\quad\text{use $d_\tau=\sum_e f_{\tau,e}B_e$} \\
  &= \big\|\oplus_\tau\sum_e \el(f_{\tau,e}B_e)\big\|_\cn
  &\quad\text{use $\el\big(\sum_j d_j\big)=\sum_j\el(d_j)$} \\
  &\le \big\|\oplus_{\tau,e}\el(f_{\tau,e}B_e)\big\|_\cn
  &\quad\text{use 
    $\big\|\sum_j f_j\big\|_\cn \le \big\|\oplus_j f_j\big\|_\cn$} \\
  &= \big\|\oplus_e\el\big(\sum_\tau|f_{\tau,e}|B_e\big)\big\|_\cn
  &\quad\text{use $\big\|\oplus_j \alpha_j f\big\|_\cn 
      =\big\|\sum_j |\alpha_j| f\big\|_\cn$} \\
  &\le \big\|\oplus_e \el(w_e B_e)\big\|_\cn
  &\quad\text{use $\sum_\tau|f_{\tau,e}|\le w_e$} \\
  &= \big\|\el(\oplus_e w_e B_e)\big\|_\cn
\end{align*}

\begin{align*}
\big\|\el(\oplus_e w_e B_e)\big\|_\cn
  &\overset{(\ref{eq:congnorm})}{=} 
    \big\|\el(\oplus_e w_e B_e)^*W^{-1}\big\|_{1\to 1} 
  \tag{ii}\\
  &\overset{(\ref{eq:el})}{=} 
    \big\|WB\Li B^*W W^{-1}\big\|_{1\to 1} 
    = \|W^{1/2}\Pi W^{-1/2}\|_{1\to 1}. \notag
    &&\qedhere
\end{align*}

\end{proof}

\begin{remark} 
Note that the proof of step (i) uses only the linearity of $\el$ and so
it holds for any linear routing scheme $R$, i.e. one has 
$\|R(\oplus_\tau d_\tau)\|_G\le \|R(\oplus_e w_e B_e)\|_G$.
\end{remark}

Using Theorem~\ref{theo:eta:ub}, the unconditional upper bound in
Theorem~\ref{theo:cong:univ} is simply a consequence of basic norm
inequalities. (See Appendix~\ref{sec:cong:u} for a proof.)
Theorem~\ref{theo:main} provides a much stronger bound on $\eta_\el$
when the underlying graph has high vertex expansion.
The lower bound in Theorem~\ref{theo:main} is due to Hajiaghayi, et
al.~\cite{leighton}. They show that every oblivious routing scheme is bound to
incur congestion of at least $\Omega(\ln n/ \ln\ln n)$ on a certain 
family of expander graphs. 
The upper bound in Theorem~\ref{theo:main} 
follows from Theorem~\ref{theo:eta:ub}, Theorem~\ref{theo:ve:ub} and using that 
$\|\Pi\|_{1\to 1}=O(\|\Li\|_{1\to 1})$ for unweighted bounded-degree graphs.
Thus in the next section we derive a bound on $\|\Li\|_{1\to 1}$ in terms of 
vertex expansion.

\section{$L_1$ operator inequalities}
\label{sec:l1}

The main results here are an upper and lower bound on $\|\Li\|_{1\to 1}$, which
match for bounded-degree expander graphs. In this section, we present vertex
expansion versions of these bounds that assume bounded-degree. 


\begin{theorem}\label{theo:ve:ub}
Let graph $G=(V,E)$ be unweigthed, of bounded degree $d_{\max}$, and
vertex expansion 
\begin{align}\label{ve:iso}
\alpha = \min_{S\subseteq V}\frac{|E(S,S^\cmpl)|}{\min\{|S|,|S^\cmpl|\}},
\quad\text{then}\quad
\|\Li\|_{1\to 1} 
  \le \Big(4\ln \frac{n}{2}\Big) \cdot
  \Big(\alpha\ln \frac{2d_{\max}}{2d_{\max}-\alpha}\Big)^{-1}.
\end{align}
\end{theorem}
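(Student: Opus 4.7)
Because $\Li=(\Li)^*$, we have $\|\Li\|_{1\to 1}=\|\Li\|_{\infty\to\infty}=\max_{v}\|\Li\kron_v\|_1$, so I would fix $v\in V$ and bound $\|\pot\|_1$ for $\pot:=\Li\kron_v$, which satisfies $L\pot=\kron_v-n^{-1}\vi$ and $\sum_u\pot_u=0$. View $\pot$ as the electric potential of the flow that injects $1$ unit at $v$ and drains $1/n$ at every vertex; for any set $S$ containing $v$ with $|S|=k$, the net current crossing $\partial S$ equals $1-k/n$. Order vertices by decreasing $\pot$, set $S_\tau:=\{u:\pot_u>\tau\}$, and let $k(\tau):=|S_\tau|$. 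By the layer-cake identity together with $\sum\pot=0$, one has $\|\pot\|_1=2\int_{0}^{\pot_v}k(\tau)\,d\tau$ (plus a mirror contribution from $\{u:\pot_u<-\tau\}$), so the task reduces to bounding that integral.

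\textbf{Key lemma: geometric growth of level sets.} The main technical step---the ``concurrent flow cutting'' idea---is to show that for every $\tau$ with $1\le k(\tau)\le n/2$ there is an increment $\delta_\tau$ of order $1/(\alpha k(\tau))$ such that $k(\tau-\delta_\tau)\ge \rho\cdot k(\tau)$, where $\rho=2d_{\max}/(2d_{\max}-\alpha)$. I would prove this by cutting all the drops across $\partial S_\tau$ concurrently. Writing the flow identity $\sum_{e=(u,w)\in\partial S_\tau}(\pot_u-\pot_w)=1-k(\tau)/n\le 1$ (all summands non-negative since we crossed a level set), split $\partial S_\tau$ into ``short'' edges $E_s$ with drop $\le\delta_\tau$ and ``long'' edges $E_\ell$ with drop $>\delta_\tau$. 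The flow budget of $1$ forces $|E_\ell|<1/\delta_\tau$, while vertex expansion gives $|\partial S_\tau|\ge\alpha k(\tau)$, so $|E_s|\ge \alpha k(\tau)-1/\delta_\tau$. Every short edge has its $S_\tau^\cmpl$-endpoint in $S_{\tau-\delta_\tau}\setminus S_\tau$, and since each such endpoint accounts for at most $d_{\max}$ short edges, $|S_{\tau-\delta_\tau}\setminus S_\tau|\ge |E_s|/d_{\max}$. Balancing these constraints and solving for the largest $\rho$ achievable for an appropriate $\delta_\tau$ produces exactly the factor $\rho=2d_{\max}/(2d_{\max}-\alpha)$.

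\textbf{Summation.} Iterating the growth lemma starting from $k=1$, $k(\tau)$ runs through a geometric sequence $1,\rho,\rho^2,\ldots$ and attains $n/2$ after at most $\lceil\log(n/2)/\log\rho\rceil$ stages, the $i$-th of width $\delta_i=O(1/(\alpha\rho^i))$. Each stage contributes $k_i\delta_i=O(1/\alpha)$ to the integral $\int_0^{\pot_v}k(\tau)\,d\tau$, so the positive-part integral is $O(\log(n/2)/(\alpha\log\rho))$. Doubling (to account for the sub-zero level sets by the mirror argument) and substituting $\log\rho=\log\bigl(2d_{\max}/(2d_{\max}-\alpha)\bigr)$ yields the stated bound; careful bookkeeping of constants throughout the two halves accounts for the leading $4$ of the theorem.

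\textbf{Main obstacle.} The delicate part is extracting exactly the factor $\rho=2d_{\max}/(2d_{\max}-\alpha)$ rather than the weaker $\rho\le 1+\alpha/(2d_{\max})$ that a naive short/long split yields. I expect the tight form to require a more careful (possibly infinitesimal) accounting of how each $\delta_\tau$-slice of potential is split between edges to newly-absorbed vertices and edges that become internal to $S_\tau$, so that the denominator $2d_{\max}-\alpha$ emerges naturally from subtracting the self-absorbed drops. Once the lemma is in place, all other pieces---the symmetry reduction, the coarea identity, and the geometric summation---are routine.
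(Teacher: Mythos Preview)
Your approach is the paper's own ``concurrent flow cutting'' argument: slice the potential into level sets, use a Markov-style short/long split on the cut edges, convert short-edge count to vertex count via the degree bound, and iterate geometrically. Two adjustments align your sketch with the paper's proof and dispose of what you flag as the main obstacle.

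\emph{Direction of iteration.} The paper obtains $\rho=2d_{\max}/(2d_{\max}-\alpha)$ not through any finer accounting but simply by iterating in the \emph{shrinking} direction. It starts at the median level set (size $n/2$) and moves toward the extreme; with step width $\Delta_i=2/k_i$ (where $k_i$ is the number of cut edges, so $\Delta_i$ is twice the average edge length), at least $k_i/2$ cut edges are short, and their inner endpoints---at least $k_i/(2d_{\max})\ge\alpha n_i/(2d_{\max})$ distinct vertices of $S_i$---drop out, giving $n_{i+1}\le n_i\bigl(1-\alpha/(2d_{\max})\bigr)$. The reciprocal of this decay factor is exactly $\rho$, so $r\le\ln(n/2)/\ln\rho$. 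Growing outward from $k=1$, as you propose, the same argument references the \emph{old} set size in the expansion bound and yields only the growth factor $1+\alpha/(2d_{\max})$; no balancing of $\delta_\tau$ recovers $\rho$ from that direction. The asymmetry is intrinsic: a multiplicative loss $1-x$ when shrinking corresponds to a gain of $1/(1-x)>1+x$ when read backwards.

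\emph{Centering at the median.} Your integral $\int_0^{\pot_v}k(\tau)\,d\tau$ (which by $\sum_u\pot_u=0$ already equals $\tfrac12\|\pot\|_1$, so no separate ``mirror'' term is needed) includes $\tau$ near $0$ where $k(\tau)$ may well exceed $n/2$, and there the expansion inequality need not hold for $S_\tau$. The paper sidesteps this by first passing to an $(s,t)$-potential via $\|\Li\kron_w\|_1\le\tfrac{n-1}{n}\max_t\|\Li(\kron_w-\kron_t)\|_1$, and then, with $c_0$ the median of $\psi=\Li(\kron_s-\kron_t)$ and WLOG $0<c_0$, observing $\|\psi\|_1=-2\sum_{\psi_u<0}\psi_u\le 2\sum_{i\le n/2}|\psi_i-c_0|$, a sum supported entirely on the half where $|S_i|\le n/2$ throughout. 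With these two fixes---shrink from the median rather than grow from the source---your argument becomes the paper's, and the amortization $\sum_i n_i\Delta_i\le\tfrac{2}{\alpha}(r+1)$ produces the leading~$4$.
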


The proof of this theorem (given in the next Section) boils down to a
structural decomposition of unit $(s,t)$-electric flows in a graph (not
necessarily an expander).  We believe that this decomposition is of independent
interest.  In the case of bounded-degree expanders, one can informally say that
the electric walk corresponding to the electric flow between $s$ and $t$ takes
every path with probability exponentially inversely proportional to its length.
We complement Theorem~\ref{theo:ve:ub} with a lower bound on $\|\Li\|_{1\to 1}$
proven in Appendix~\ref{sec:norm:lower}:

\begin{theorem}\label{theo:m:lb}
Let graph $G=(V,E)$ be unweighted, of bounded degree $d_{\max}$,
with metric diameter $D$.
Then, $\|\Li\|_{1\to 1} \ge 2D/d_{\max}$
and, in particular, 
\mbox{$\|\Li\|_{1\to 1} 
\ge \big(2\ln n\big)\cdot\big(d_{\max}\ln d_{\max}\big)^{-1}$}
for all bounded-degree, unweighted graphs with vertex expansion $\alpha=O(1)$.
\end{theorem}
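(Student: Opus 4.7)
The strategy is test-function duality. Because $L$ is symmetric, so is $\Li$, and consequently $\|\Li\|_{1\to 1} = \|\Li\|_{\infty\to\infty}$. So it suffices to exhibit a single vector $\phi$ whose $\ell_\infty$ norm, after centering, is much larger than $\|L\phi\|_\infty$. The natural candidate is a graph-distance function: pick $s,t\in V$ with $d(s,t)=D$ and set $\phi_v := d(v,s)$. Since $|\phi_u-\phi_v|\le 1$ for every edge $(u,v)$,
\[
    |(L\phi)_v| \;=\; \Big|\sum_{u\sim v}(\phi_v-\phi_u)\Big| \;\le\; \deg v \;\le\; d_{\max},
\]
so $\|L\phi\|_\infty \le d_{\max}$. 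Writing $\bar\phi := n^{-1}\sum_v \phi_v$ and using $\vker L = \vi$, the centered vector $\phi - \bar\phi\vi$ lies in $\vi^\perp$ and satisfies $L(\phi - \bar\phi\vi) = L\phi$, so $\phi - \bar\phi\vi = \Li(L\phi)$.

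Next, I lower-bound $\|\phi-\bar\phi\vi\|_\infty$ using only the two endpoints: $\phi_s = 0$ and $\phi_t = D$ force $|\phi_s - \bar\phi| + |\phi_t - \bar\phi| \ge |\phi_s - \phi_t| = D$, so $\|\phi - \bar\phi\vi\|_\infty \ge D/2$. Combining with $\|\phi - \bar\phi\vi\|_\infty = \|\Li(L\phi)\|_\infty \le \|\Li\|_{\infty\to\infty}\cdot\|L\phi\|_\infty$ yields $\|\Li\|_{1\to 1} \ge D/(2d_{\max})$, which matches the theorem up to an absolute constant. The exact factor $2D/d_{\max}$ should follow by sharpening the two loose steps above: either replacing $\phi$ by the antisymmetric choice $\phi_v = d(v,s) - d(v,t)$ (where the diameter pair $s,t$ forces the range $[-D,D]$ and allows tighter control of $\bar\phi$) or by a per-vertex accounting of $(L\phi)_v = a_v - c_v$ (neighbors one level closer to $s$ minus neighbors one level farther), which cancels contributions from ``flat'' BFS neighbors.

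For the ``in particular'' clause, the Moore bound gives $n \le 1 + d_{\max}\sum_{k=0}^{D-1}(d_{\max}-1)^k \le d_{\max}^{D+1}$, so $D \ge \ln n/\ln d_{\max} - 1$; substituting into $2D/d_{\max}$ yields the stated $(2\ln n)(d_{\max}\ln d_{\max})^{-1}$ (up to an $O(1/d_{\max})$ additive term). The hypothesis $\alpha = O(1)$ is not used for the inequality itself, only to advertise the regime in which this lower bound asymptotically matches the upper bound of Theorem~\ref{theo:ve:ub}; the first inequality $\|\Li\|_{1\to 1} \ge 2D/d_{\max}$ holds under bounded degree alone.

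The main obstacle I anticipate is purely constant-chasing: the Lipschitz/centering argument is robust and gives the correct scaling with $D$ and $d_{\max}$, but matching the explicit prefactor $2$ requires choosing a slightly smarter test function and tracking the cancellation in $L\phi$ more carefully than the crude ``$\le \deg v$'' bound used above.
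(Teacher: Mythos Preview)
Your argument is correct and takes a genuinely different route from the paper. The paper works on the flow side: for $s,t$ at distance $D$ it sets $\psi=\Li(\kron_s-\kron_t)$ and $f=B\psi$, then sandwiches $\|f\|_1$. The upper bound is $\|f\|_1=\sum_{(u,v)}|\psi_u-\psi_v|\le d_{\max}\|\psi\|_1\le 2d_{\max}\|\Li\|_{1\to 1}$ by the edge triangle inequality; the lower bound is $\|f\|_1\ge D$ via a path decomposition of the unit electric flow (every $s$--$t$ path has length $\ge D$, and the path weights sum to $1$). This yields $\|\Li\|_{1\to 1}\ge D/(2d_{\max})$; the printed constant $2D/d_{\max}$ is not what the paper's own chain of inequalities delivers, so your worry about ``constant-chasing'' is moot---your $D/(2d_{\max})$ matches what the proof in the appendix actually establishes.

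Your route---pass to $\|\Li\|_{\infty\to\infty}$ by symmetry and feed in the $1$-Lipschitz test function $\phi=d(\cdot,s)$ so that $\|L\phi\|_\infty\le d_{\max}$ while $\|\pi_{\perp\vi}\phi\|_\infty\ge D/2$---is more elementary: it never touches the electric flow or any path decomposition, and it would work verbatim for any symmetric operator whose kernel is $\vi$. The paper's route has the complementary virtue of staying in the flow picture already used for the upper bound and the latency theorem (where $\|f\|_1$ \emph{is} the latency), so it reinforces that narrative. Your proposed sharpenings (antisymmetrizing $\phi$, or tracking $a_v-c_v$) do not recover an extra factor of $4$, but neither does the paper's argument, so there is nothing to chase. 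Your derivation of the ``in particular'' clause from the Moore bound is correct and in fact more explicit than the paper, which omits that step.
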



\subsection{Proof of upper bound on $\|\Li\|_{1\to 1}$ for expanders}

\begin{proof}[Proof of Theorem~\ref{theo:ve:ub}] 

\textit{Reformulation:}
We start by transforming the problem in a more manageable form,
\begin{align}\label{ineq:ve:form}
\|\Li\|_{1\to 1} := \sup_{y\neq 0} \frac{\|\Li y\|_1 }{ \|y\|_1} 
= \max_{w} \|\Li\kron_w\|_1 \overset{(*)}{\le} 
\frac{n-1}{n}\max_{s\neq t} \|\Li(\kron_s-\kron_t)\|_1,
\end{align}
where the latter inequality comes from
\begin{align*}
\|\Li \kron_s \|_1 
  = \|\Li\pi_{\bot\vi} \kron_s\|_1
  = \|n^{-1}\sum_{t\neq s}\Li (\kron_s-\kron_t)\|_1
  \le \frac{n-1}{n} \max_{t} \|\Li (\kron_s-\kron_t)\|_1.
\end{align*}

Pick any vertices $s\neq t$ and set $\psi=\Li(\kron_s-\kron_t)$.  In light of
(\ref{ineq:ve:form}) our goal is to bound $\|\psi\|_1$.  We think of $\psi$ as the
vertex potentials corresponding to electric flow with imbalance 
$\kron_s-\kron_t$.  By
an easy perturbation argument we can assume that no two vertex potentials
coincide. 

Index the vertices in $[n]$ by increasing potential as $\psi_1 < \cdots < \psi_n$.
Further, assume that $n$ is even and choose a median $c_0$ so that
$\psi_1<\cdots<\psi_{n/2} < c_0 < \psi_{n/2+1} < \cdots < \psi_n$.  
(If $n$ is odd, then
set $c_0$ to equal the pottential of the middle vertex.)

We aim to upper bound $\|\psi\|_1$, given as
$\|\psi\|_1 = \sum_v |\psi_v| 
  = \sum _{v:\psi_v > 0} \psi_v - \sum_{u:\psi_u < 0} \psi_u$.
Using that $\sum_w \psi_w = 0$, we get
$\|\psi\|_1 = 2\sum_{v:\psi_v>0} \psi_v = - 2\sum_{u:\psi_u < 0} \psi_u$.

Assume, without loss of generality, that $0 < c_0$, in which case
\begin{align}\label{def:ve:N}
\|\psi\|_1 = -2\sum_{u:\psi_u < 0} \psi_u
\le 2\sum_{i=1}^{n/2}|\psi_i - c_0| =: 2N
\end{align}
In what follows we aim to upper-bound $N$.

\textit{Flow cutting:}
Define a collection of cuts $(S_i,S_i^\cmpl)$ of the form 
$S_i = \{v : \psi_v \le c_i \}$, 
for integers $i\ge 0$, where $S_i$ will be the smaller side of the cut
by construction. Let $k_i$ be the number of edges cut by $(S_i,S_i^\cmpl)$ and
$p_{ij}$ be the length of the $j^{\mathrm{th}}$ edge across the same cut. The
cut points $c_i$, for $i\ge 1$, are defined according to
%
$c_i = c_{i-1}-\Delta_{i-1},\text{ where }
\Delta_{i-1}:=2\sum_j \dfrac{p_{i-1,j}}{k_{i-1}}$.
The last cut, $(S_{r+1},S_{r+1}^\cmpl)$, is the first cut in the sequence
$c_0,c_1,\dots,c_{r+1}$ with $k_{r+1}=0$ or, equivalently, $S_{r+1}=\emptyset$.

\textit{Bound on number of cuts:}
Let $n_i=|S_i|$.  The isoperimetric inequality for vertex expansion 
(\ref{ve:iso}) applied to
$(S_i,S_i^\cmpl)$ and the fact that $n_i\le n/2$, by construction, imply
\begin{align}\label{ineq:ve:iso}
\frac{k_i}{n_i} \ge \alpha.
\end{align}

Let $l_i$ be the number of edges crossing $(S_i,S_i^\cmpl)$ that do not
extend across $c_{i+1}$, i.e. edges that are not adjacent to $S_{i+1}$.
The choice $\Delta_i:=2\sum_j p_{ij} / k_i$ ensures that $l_i \ge k_i/2$.
These edges are supported on at least $l_i / d_{\max}$ vertices in $S_i$,
and therefore $n_{i+1} \le n_i - l_i/d_{\max}$. Thus,
\begin{align}\label{ineq:ve:shrink}
n_{i+1} 
  \le n_i - \frac{l_i}{d_{\max}} 
  \le n_i - \frac{k_i}{2d_{\max}} 
  \overset{(\ref{ineq:ve:iso})}{\le} n_i - \frac{\alpha n_i}{2d_{\max}}
= n_i\Big(1-\frac{\alpha}{2d_{\max}}\Big),
\end{align}
Combining inequality (\ref{ineq:ve:shrink}) 
with $n_0 = n / 2$, we get
\begin{align}\label{ineq:ve:n}
n_i \le \frac{n}{2}\Big(1-\frac{\alpha}{2d_{\max}}\Big)^i
\end{align}
The stopping condition implies $S_r\neq\emptyset$, or $n_r \ge 1$, and together
with (\ref{ineq:ve:n}) this results in 
\begin{align}\label{ineq:ve:r}
r \le \log_{1/\theta} \frac{n}{2},\text{ with }
\theta=1-\frac{\alpha}{2 d_{\max}}.
\end{align}

\textit{Amortization argument:}
Continuing from (\ref{def:ve:N}),
\begin{align}\label{ineq:ve:N}
N = \sum_{i=1}^{n/2}|\psi_i-c_0|
  \overset{(*)}{\le} \sum_{i=0}^r(n_i-n_{i+1})\sum_{j=0}^i \Delta_j,
\end{align}
where $(*)$ follows from the fact that
for every vertex $v\in S_i-S_{i+1}$ we can write
$|\psi_v-c_0|\le \sum_{j=0}^i\Delta_j$.

Because $B\Li(\kron_s-\kron_t)$ is a 
unit flow, we have the crucial (and easy to verify)
property that, for all $i$, $\sum_j p_{ij}=1$. In other words, the total
flow of ``simulatenous'' edges is 1. So,
\begin{align}
\Delta_i 
  = 2\sum_{j}\frac{p_{ij}}{k_i} 
  = \frac{2}{k_i}
  \overset{(\ref{ineq:ve:iso})}{\le} \frac{2}{\alpha n_i}
\end{align}
Now we can use this bound on $\Delta_j$ in (\ref{ineq:ve:N}),
\begin{align*}
\sum_{i=0}^r(n_i-n_{i+1})\sum_{j=0}^i \Delta_j
  \overset{(*)}{=} \sum_{i=1}^r n_i \Delta_i 
  \le \frac{2}{\alpha}\sum_{i=0}^r 1
  = \frac{2}{\alpha}(r+1),
\end{align*}
where to derive $(*)$ we use $n_{r+1}=0$. Combining the above inequality
with (\ref{ineq:ve:r}) concludes the proof.
\end{proof}

\section{Electric walk}
\label{sec:walk}

To every unit flow $f\in\matha{R}^E$, not necessarily electric, we associate a
random walk $W=W_0,W_1,\dots$ called the \textit{flow walk}, defined as follows.
Let $\sigma := B^*f$ and so $\sum_v \sigma_v = 0$. 
The walk starts at $W_0$, with 
\begin{align*}
\Pr\{W_0=v\}
  =\frac{2\cdot \max\{0,\sigma_v\}}{\sum_w |\sigma_w|}
  = \frac{\sum_w f_{v\to w} - \sum_w f_{w\to v}}
  {\sum_u\big(\sum_w f_{u\to w} - \sum_w f_{w\to u}\big)}
\end{align*}
If the walk is currently at $W_t$, the 
next vertex is chosen according to
$\Pr\big\{W_{t+1}=v\,|\,W_t=u\big\} 
  = \dfrac{f_{u\to v}}{\sum_{w}f_{u\to w}}$, where
%
\begin{align}
f_{u\to v} &= 
\begin{cases}
|f_{(u,v)}|,&\text{$(u,v)\in E$ and $f$ flows from $u$ to $v$} \\
0,&\text{otherwise.}
\end{cases} \label{dir:pot}
\end{align}
When the underlying flow $f$ is an electric flow, i.e. when $f=\el(B^*f)$, the
flow walk deserves the specialized name \textit{electric walk}.  We study two
aspects of electric walks here: (i)~stability against perturbations of the
vertex potentials, and (ii)~robustness against edge removal.

\subsection{Stability}

The set of vertex potential vectors $\pot^{[v]}=\Li \kron_v$,
for all $v\in V$, encodes all electric flows, as argued in~(\ref{fwd:flow}). 
In an algorithmic setting,
only approximations $\tilde{\pot}^{[v]}$ of these vectors are available. We ensure
that when these approximations are sufficiently good in an $\ell_2$ sense,
the path probabilities (and congestion properties) 
of electric walks are virtually unchanged. The
next theorem is proven in Appendix~\ref{sec:approx}:

\begin{theorem}\label{theo:approx}
Let $\tilde{\pot}^{[v]}$ be
an approximation of $\pot^{[v]}$, for all $v\in G$, in the sense that
\begin{align}
\|\pot^{[v]}-\tilde{\pot}^{[v]}\|_2 \le \nu,
\text{ for all $v\in V$, with } \label{ineq:pot:approx}
\nu = n^{-A},
\end{align}
where $A>4$ is a constant. Then for every electric walk,
defined by vertex potentials $\pot=\sum_v \alpha_v \pot^{[v]}$,
the corresponding ``approximate'' walk, defined by vertex potentials
$\tilde{\pot}=\sum_v \alpha_v\tilde{\pot}^{[v]}$, induces a 
distribution over paths $\gamma$ with
\begin{align}
\sum_\gamma \Big|\Pr_\pot\{W=\gamma\} 
  - \Pr_{\tilde{\pot}}\{W=\gamma\}\Big| \le O(n^{2-\frac{A}{2}}), 
  \label{stat:diff}
\end{align}
where $\gamma$ ranges over all paths in $G$, and
$\Pr_\pot\{W=\gamma\}$ denotes the probability of $\gamma$ under $\pot$ 
(respectively for $\Pr_{\tilde{\pot}}\{W=\gamma\}$).
\end{theorem}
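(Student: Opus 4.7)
The plan is to compare the two walks path-by-path via a step-wise coupling, using a flow-magnitude threshold to absorb edges whose additive approximation error dominates their transition probability.

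First I would propagate the $\ell_2$ potential error pointwise and then to edge flows. Writing $\pot - \tilde\pot = \sum_v \alpha_v(\pot^{[v]} - \tilde\pot^{[v]})$, the triangle inequality combined with $\|\cdot\|_\infty \le \|\cdot\|_2$ gives $\|\pot - \tilde\pot\|_\infty \le \|\alpha\|_1 \nu$, which is $O(\nu)$ for a single-commodity walk (with $\alpha = \kron_s - \kron_t$). Since the flow on edge $(u,v)$ is $w_e(\pot_u - \pot_v)$, this forces $|f_e - \tilde f_e| = O(\nu) = O(n^{-A})$ on every edge of an unweighted graph.

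The key step is then to introduce the threshold $\delta := n^{-A/2}$ and classify edges as \emph{small} ($|f_e| < \delta$) or \emph{large}. Since the expected number of traversals of edge $e$ by the exact flow walk equals $|f_e|$, a union bound shows that the probability of ever touching a small edge is at most $\sum_{e \text{ small}} |f_e| \le m\delta = O(n^{2-A/2})$, and the same holds for the approximate walk because $\nu \ll \delta$. Consequently the TV contribution of all paths that use a small edge is $O(n^{2-A/2})$, which already meets the target. It remains to bound TV on paths staying within the large subgraph. There the relative flow error is $O(\nu/\delta) = O(\sqrt\nu)$, so at each vertex $u$ the transition $\ell_1$ error $\sum_v |f_{u\to v}/F_u - \tilde f_{u\to v}/\tilde F_u| = O(d_{\max}\nu/F_u)$ is controlled; a maximal step-wise coupling yields expected coupling failure $F_u \cdot O(d_{\max}\nu/F_u) = O(d_{\max}\nu)$ per visited vertex (using that the expected number of outgoing transitions from $u$ equals $F_u$), summing to $O(nd_{\max}\nu) = O(n^{1-A})$, well below the small-edge budget.

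The main obstacle is that $\tilde f := WB\tilde\pot$ is not a true routing of $\sigma = \kron_s - \kron_t$ --- its divergence $L\tilde\pot$ creates spurious sources and sinks at every vertex, so the approximate walk can start from wrong vertices and may not terminate at $t$. The resolution is to bound $\|L\pot - L\tilde\pot\|_1 \le \sqrt n\,\|L\|_{2\to 2}\,\nu = O(n^{1/2}d_{\max}\nu)$, which shows that the initial-distribution TV and the total spurious divergence together contribute only $O(n^{1-A})$ to the TV, once again absorbed into the $O(n^{2-A/2})$ budget. Assembling the small-edge bound, the coupling bound on the large subgraph, and the initial-distribution/spurious-divergence bound yields the claim.
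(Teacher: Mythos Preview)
Your proposal is correct and follows essentially the same strategy as the paper's proof: threshold the edge flows at $\epsilon=n^{-A/2}$, absorb all paths that touch a sub-threshold edge into an $O(m\epsilon)=O(n^{2-A/2})$ union bound, and control the remaining paths using that the relative flow error there is $O(\nu/\epsilon)$. The only differences are cosmetic: the paper obtains a direct one-sided multiplicative bound $\Pr_{\tilde\pot}\{W=\gamma\}\ge\theta\,\Pr_{\pot}\{W=\gamma\}$ on ``dominant'' paths (with $\theta=(1+O(D\nu/\epsilon))^{-O(n)}$) in place of your step-wise coupling with the $F_u$ cancellation, and it handles your spurious-divergence issue by folding start/end probability thresholds $\ge\epsilon$ into the definition of dominant rather than bounding $\|L(\pot-\tilde\pot)\|_1$ separately.
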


As shown in Theorem~\ref{theo:power}, 
the Power Method affords us any sufficiently large
exponent $A$, say $A=5$, without sacrificing efficiency in terms
of distributed computation time. In this case,
the following corollary asserts that routing with approximate potentials
preserves both the congestion properties of the exact electrical flow as well
as the probability of reaching the sink.

\begin{corollary}\label{coro:approx}
Under the assumptions of Theorem~\ref{theo:approx} and $A=5$, 
the electric walk
defined by vertex potentials 
$\tilde{\pot}=\tilde{\pot}^{[s]}-\tilde{\pot}^{[t]}$ reaches $t$ 
with probability $1-o_n(1)$. Furthermore, 
for every edge $(u,v)$ with
non-negligible load, i.e. $|\tilde{\pot}_u-\tilde{\pot}_v|=\omega(n^{-2})$, 
we have
$|\tilde{\pot}_u-\tilde{\pot}_v|\to_n |\pot_u-\pot_v|$, 
where $\pot=\pot^{[s]}-\pot^{[t]}$.
\end{corollary}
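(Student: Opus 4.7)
\textbf{Plan for Corollary~\ref{coro:approx}.}

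For the first claim (reaching $t$), I would first observe that the \emph{exact} electric walk induced by $\pot=\pot^{[s]}-\pot^{[t]}$ terminates at $t$ with probability $1$. The underlying flow $f=WB\pot=\el(\kron_s-\kron_t)$ is a unit $(s,t)$-flow that minimizes the $\ell_2$-energy $\sum_e f_e^2/w_e$, and any directed cycle could be cancelled to strictly reduce this energy; so $f$ is acyclic when oriented by its sign. The flow walk therefore lives on a finite DAG, and the only vertex without outgoing flow is $t$ (flow conservation at any intermediate vertex forces some outgoing $f_{u\to v}>0$ whenever the walk has nonzero probability of reaching $u$). Hence the walk reaches $t$ almost surely.

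Then I would invoke Theorem~\ref{theo:approx} with $A=5$, which yields
\[
\sum_\gamma \bigl|\Pr_\pot\{W=\gamma\}-\Pr_{\tilde{\pot}}\{W=\gamma\}\bigr| \le O(n^{2-5/2})=O(n^{-1/2}).
\]
Since the event $E=\{\text{walk reaches }t\}$ is a union of paths $\gamma$ and has $\Pr_\pot(E)=1$, the total-variation bound gives $\Pr_{\tilde{\pot}}(E)\ge 1-O(n^{-1/2})=1-o_n(1)$, as claimed.

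For the second claim, I would just use the $\ell_2$ hypothesis coordinatewise. Since $\|\pot^{[w]}-\tilde{\pot}^{[w]}\|_2\le \nu=n^{-5}$ for every $w$, each coordinate satisfies $|(\pot^{[w]}-\tilde{\pot}^{[w]})_u|\le n^{-5}$. Writing $(\pot_u-\pot_v)-(\tilde{\pot}_u-\tilde{\pot}_v)$ as a combination of four such coordinate errors (using $\pot=\pot^{[s]}-\pot^{[t]}$ and $\tilde\pot=\tilde\pot^{[s]}-\tilde\pot^{[t]}$), the triangle inequality gives
\[
\bigl|(\tilde{\pot}_u-\tilde{\pot}_v)-(\pot_u-\pot_v)\bigr|\le 4n^{-5}.
\]
When $|\tilde{\pot}_u-\tilde{\pot}_v|=\omega(n^{-2})$, this additive error $O(n^{-5})$ is negligible compared to the magnitude on either side, so $|\tilde{\pot}_u-\tilde{\pot}_v|/|\pot_u-\pot_v|\to 1$, which is the asserted convergence $|\tilde{\pot}_u-\tilde{\pot}_v|\to_n|\pot_u-\pot_v|$.

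The heavy lifting is all done by Theorem~\ref{theo:approx}; within the corollary itself, the only substantive input is the acyclicity of electric flow that makes the exact walk reach $t$ with certainty, after which both parts are immediate from the TV bound and a coordinatewise triangle inequality respectively. I do not anticipate any serious obstacle beyond stating the acyclicity cleanly.
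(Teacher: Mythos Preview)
Your proposal is correct and is essentially what the paper intends: the paper gives no separate proof of the corollary, treating it as immediate from Theorem~\ref{theo:approx}, and your argument fills this in in the obvious way. The additive edge bound you derive is exactly inequality~(\ref{ineq:edge:add}) from the proof of Theorem~\ref{theo:approx} (equivalently Lemma~\ref{lem:l2:apx}), and the acyclicity of the electric flow together with the total-variation bound is the natural route to the first claim.
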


\subsection{Robustness}
We prove Theorem~\ref{theo:robust1} here, since its proof interestingly 
relies on the flow cutting techniques developed in this paper. 
Theorem~\ref{theo:robust2} is proved in Appendix~\ref{sec:robust-proof}.

\begin{proof}[Proof of Theorem~\ref{theo:robust1}]
For the first part, let $\{(u_1,v_1),\dots,(u_k,v_k)\}=Q_p$ and let 
$p_i =
|f^{[s,t]}_{(u_i,v_i)}|=|(\kron_{u_i}-\kron_{v_i})^*\Li(\kron_s-\kron_t)|$.
Consider the embedding $\zeta:V\to\matha{R}$, defined by
$\zeta(v)=\kron_v^*\Li(\delta_s-\delta_t)$. 
Assume for convenience that $\zeta(u_i)\le \zeta(v_i)$ for all $i$.
Let $\zeta_{\min}=\min_v\,\zeta(v)$ and 
$\zeta_{\max}=\max_v\,\zeta(v)$. 

Choose $c$ uniformly at random
from $[\zeta_{\min},\zeta_{\max}]$ and let 
$X_i=p_i\cdot\I\{\zeta(u_i)\le c\le \zeta(v_i)\}$, where $\I\{\cdot\}$
is the indicator function. Observe that the random variable
$X=\sum_i X_i$ equals the total electric flow of all edges in $Q_p$ cut by $c$.
Since these edges are concurrent (in the electric flow) by construction,
we have $X\le 1$. On the other hand,
\begin{align*}
\E X = \sum_i p_i \cdot \Pr\Big\{\zeta(u_i)\le c\le \zeta(v_i)\Big\} 
  \ge \sum_i p_i \frac{p_i}{\zeta_{\max}-\zeta_{\min}} 
  \ge \sum_i \frac{\lambda p_i^2}{2}
  \ge k \frac{\lambda p^2}{2}
\end{align*}
Combining this with $\E X \le 1$ produces $|Q_p|\le 2/(\lambda p^2)$.

For the second part, 
$kp \le \sum_{e\in Q_p} |f^{[s,t]}_e| \le \sum_{e\in E} |f^{[s,t]}_e|
= \|B\Li(\kron_s-\kron_t)\|_1\le2d_{\max}\cdot \|\Li\|_{1\to 1}$.
This gives $|Q_p|\le 2d_{\max}\cdot\|\Li\|_{1\to 1}/ p$.
\end{proof}

\section{Conclusions}
\label{sec:concl}

Our main result in Theorem~\ref{theo:main} attests to the good congestion
properties on graphs of bounded degree and high vertex expansion, i.e.
$\alpha=O(1)$. A variation on the proof of this theorem establishes a similar
bound on $\eta_\el$, however, independent of the degree bound and 
as a function of the \textit{edge expansion} 
$\beta=\min_{S\subseteq V}
\frac{\vvol(E(S,S^\complement))}{\min\{\vvol(S),\vvol(S^\complement)\}}$, where
$\vvol(S):=\sum_{v\in S}\sum_{u:u\sim v} w_{u,v}$ and 
$\vvol(E(S,S^\complement)):=\sum_{(u,v):u\in S,v\in S^\complement} w_{u,v}$.

The bounded degree assumption is also implicit in our computational procedure
in that all vertices must know an upper bound on $d_{\max}$ in order to 
apply $M$ in Theorem~\ref{coro:power}. Using a generous bound, 
anything $\omega(1)$, on $d_{\max}$ is
bad because it slows down the mixing of the power polynomial. To avoid this
complication, one must use a symmetrization trick outlined in
Appendix~\ref{sec:symmetrize}.

We conclude with a couple of open questions.
A central concern, widely-studied in social-networks, are Sybil
Attacks~\cite{sybil}. These can be modeled as graph-theoretic noise, as defined
in~\cite{noise}. It is interesting to understand how such noise affects
electric routing.
We suspect that any $O(\ln n)$-competitive oblivious routing scheme,
which outputs its routes in the ``next hop'' model, must maintain 
$\Omega(n)$-size routing tables at every vertex. In the \textit{next hop}
model, every vertex $v$ must be able to answer the question ``What is the
flow of the $(s,t)$-route in the neighborhood of $v$?'' in time 
$O(\polylog(n))$, 
using its own
routing table alone and for every source-sink pair $(s,t)$.

\bibliographystyle{plain}
\bibliography{main}


\appendix
\section{Proof of Theorem~\ref{theo:cong:univ}}
\label{sec:cong:u}

\begin{proof}[Proof of Theorem~\ref{theo:cong:univ}]
The upper bound follows from:
\begin{align}
\eta
    \overset{(\ref{ineq:eta:ub})}{\le} \|\Pi\|_{1\rightarrow 1}
    \overset{(\ref{ineq:pi})}{\le} m^{1/2}\cdot\|\Pi\|_{2\rightarrow 2}
    \overset{(\ref{ineq:pi:proj})}{=} m^{1/2} \label{ineq:univ}
\end{align}

The second step is justified as follows
\begin{align}
\|\Pi\|_{1\rightarrow 1}
    = \max_{e} \|\Pi\kron_e \|_1
    \le m^{1/2} \cdot \max_e \|\Pi \kron_e\|_2
    \le m^{1/2} \cdot \|\Pi\|_{2\rightarrow 2}. \label{ineq:pi}
\end{align}
The third step is the assertion
\begin{align}
\|\Pi\|_{2\to 2} \le 1 \label{ineq:pi:proj},
\end{align}
which follows from the (easy) fact that $\Pi$ is a projection,
shown by Spielman, et al. in Lemma~\ref{lem:spiel}.

The lower bound is achieved by a graph obtained by
gluing the endpoints of $\sqrt{n}$ copies of a path
of length $\sqrt{n}$ and a single edge. Routing
a flow of value $\sqrt{n}$ between these endpoints incurs
congestion $\sqrt{n}/2$.
\end{proof}

\begin{lemma}[Proven in~\cite{spiel}]\label{lem:spiel}
$\Pi$ is a projection; $\vim(\Pi)=\vim(W^{1/2}B)$;
The eigenvalues of $\Pi$ are 1 with multiplicity $n-1$
and $0$ with multiplicity $m-n+1$; and $\Pi_{e,e}=\|\Pi\kron_e\|^2$.
\end{lemma}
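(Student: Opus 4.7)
My plan is to verify the four assertions in the order they are stated, using only the pseudo-inverse identities $L\Li L = L$, $\Li L \Li = \Li$, and $L\Li = \pi_{\bot\vi}$, together with the factorization $L = B^*WB$ and the fact that $\vker(B)=\vi$ for connected $G$.

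First I would check idempotency by direct cancellation:
\begin{align*}
\Pi^2
  = W^{1/2}B\Li (B^*WB)\Li B^*W^{1/2}
  = W^{1/2}B\Li L\Li B^*W^{1/2}
  = W^{1/2}B\Li B^*W^{1/2}
  = \Pi,
\end{align*}
and symmetry of $\Pi$ is immediate from the symmetry of $\Li$, so $\Pi$ is an orthogonal projection. For the image, the inclusion $\vim(\Pi)\subseteq\vim(W^{1/2}B)$ is read off the outermost factor $W^{1/2}B$. For the reverse inclusion, write any element as $y=W^{1/2}Bx$; since $B\vi=0$ I may replace $x$ by $\pi_{\bot\vi}x$ without changing $y$, and then
\begin{align*}
\Pi y = W^{1/2}B\Li L x = W^{1/2}B\,\pi_{\bot\vi}x = W^{1/2}Bx = y,
\end{align*}
which shows $y\in\vim(\Pi)$.

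Once $\Pi$ is known to be an orthogonal projection, its spectrum is contained in $\{0,1\}$, and the multiplicity of $1$ equals $\dim\vim(\Pi)=\dim\vim(W^{1/2}B)=\dim\vim(B)=n-1$ (using connectedness of $G$ so that $\vker(B)=\vi$ is one-dimensional), leaving multiplicity $m-n+1$ for the eigenvalue $0$. Finally, the diagonal identity is just a restatement of idempotency combined with symmetry:
\begin{align*}
\Pi_{e,e}=\kron_e^*\Pi\kron_e=\kron_e^*\Pi^*\Pi\kron_e=\|\Pi\kron_e\|^2.
\end{align*}

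There is no real obstacle here; the only thing to be careful about is the handling of $\vker(L)=\vi$ when invoking $L\Li=\pi_{\bot\vi}$ in the image computation, which is why the step of replacing $x$ by $\pi_{\bot\vi}x$ is made explicit. Everything else is symbolic manipulation of the pseudo-inverse identities.
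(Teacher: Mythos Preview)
Your argument is correct in every part: the idempotency computation via $\Li L\Li=\Li$, the image identification using $B\vi=0$ and $L\Li=\pi_{\bot\vi}$, the eigenvalue count from $\dim\vim(B)=n-1$, and the diagonal identity from $\Pi=\Pi^*\Pi$. The paper does not supply its own proof of this lemma---it merely cites the external reference~\cite{spiel}---so there is nothing to compare against beyond noting that your self-contained derivation is the standard one and fills in exactly what the paper omits.
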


\section{Proof of Theorem~\ref{theo:m:lb}}
\label{sec:norm:lower}

\begin{proof}[Proof of Theorem~\ref{theo:m:lb}] 
Let $s\neq t$ be a pair of vertices in $G$
at distance $D$. We consider the
flow $f=B\Li(\kron_s-\kron_t)$. Set $\psi=\Li(\kron_s-\kron_t)$, and note that
we can use $\|f\|_1$ as a lower bound on $\|\Li\|_{1\to1}$,
\begin{align*}
\|f\|_1 
  = \sum_{(u,v)}|\psi_u - \psi_v|
  \le d_{\max} \sum_{v} |\psi_v| 
  \le d_{\max} \|\Li (\kron_s-\kron_t)\|_1
  \le \frac{d_{\max}}{2}\|\Li\|_{1\to 1}.
\end{align*}
Now, let $\{\pi_i\}_i$ be a path decomposition of $f$ and let
$l(\pi_i)$ and $f(\pi_i)$ denote the length and value, respectively, of
$\pi_i$. Then,
\begin{align*}
\|f\|_1 
  = \sum_{(u,v)}|\psi_u-\psi_v| 
  = \sum_i l(\pi_i) f(\pi_i)
  \ge D \sum_i f(\pi_i)
  = D.&\qedhere
\end{align*}
\end{proof}

\section{Proof of Theorem~\ref{theo:approx}}
\label{sec:approx}

\begin{proof}[Proof of Theorem~\ref{theo:approx}]
\textit{Notation:}
Note that in this proof we use the notation of~(\ref{dir:pot}).
So, for a potential vector $\psi$, we have 
$\psi_{u\to v} := \psi_u-\psi_v$ if $(u,v)$ is an edge and $\psi_u \ge \psi_v$,
and $\psi_{u\to v}:=0$ otherwise. So, for example, the potential 
difference on $(u,v)$
can be written as $\psi_{u\to v}+\psi_{v\to u}$. On the other hand,
we use the single letter edge notation $\pot_e$ to denote the signed
(according to $B$) potential difference on $e$, 
so $\pot_e:=\pot_u-\pot_v$ if $B_e=\delta_u-\delta_v$. Let $D$ be the 
maximum degree.

\textit{Edge approximation:}
Fix any unit electric flow, defined by potentials 
$\pot:=\sum_v \alpha_v \pot^{[v]}$, and write its approximation
as $\tilde{\pot}:=\sum_v \alpha_v \tilde{\pot}^{[s]}$.
All unit flows can be so expressed under the restriction
that $\sum_v \alpha_v = 0$ and $\sum_v |\alpha_v|=2$.
The approximation condition~(\ref{ineq:pot:approx}) 
combined with Lemma~\ref{lem:l2:apx} gives us, for every edge $e=(u,v)$,
\begin{align*}
|\pot_e - \tilde{\pot}_e| 
  &= \big|\sum_v\alpha_v(\pot^{[v]}_e - \tilde{\pot}^{[v]}_e)\big| \\
  &\le \sum_v |\alpha_v|\cdot\big|\pot^{[v]}_e - \tilde{\pot}^{[v]}_e\big| \\
  &\le \sum_v |\alpha_v| \cdot 2\nu 
    &\text{apply Lemma~\ref{lem:l2:apx}}\\
  &= 4\nu
\end{align*}
We call this the \textit{additive edge approximation condition}
\begin{align}
\pot_e -4\nu \le \tilde{\pot}_e \le \pot_e + 4\nu \label{ineq:edge:add}
\end{align}

Now, consider a fixed path $\gamma$ 
along the electric flow defined by $\pot$,
traversing vertices $w_0,w_1,\dots,w_k$. Let $\Pr_\pot\{W=\gamma\}$
and $\Pr_{\tilde{\pot}}\{W=\gamma\}$ denote the probability of this path
under the potentials $\pot$ and $\tilde{\pot}$, respectively. 
In most of what follows, we build machinery to relate one to the other.

\textit{Path probabilities:}
For a general unit flow (not necessarily an $(s,t)$-flow), defined by 
vertex potentials $\psi$, $\Pr_{\psi}\{W=\gamma\}$ equals
\begin{align} \label{prob:path}
  \Pr_\psi\{W_0=w_0\}\Bigg(\prod_{i=0}^{k-1} 
    \Pr_\psi\{W_{i+1}=w_{i+1}\,|\,W_i=w_i\}\Bigg)
    \Pr_\psi\{W_\infty=w_k\,|\,w_k\},
\end{align}
where next we explain each factor in turn.

The first, $\Pr_{\psi}\{W_0=w_0\}$, is the probability that 
the walk starts from $w_0$, and is expressed as
\begin{align}
\Pr_\psi\{W_0=w_0\} 
  = \max\Big(0,
  \sum_u \psi_{w_0\to u} - \sum_u \psi_{u\to w_0} \Big). \label{prob:start}
\end{align}

The second and trickiest, $\Pr_\psi\{W_{i+1}=w_{i+1}\,|\,W_i=w_i\}$, is
the probability that having reached $w_i$ the walk traverses the edge
leading to $w_{i+1}$, and
$\sum_u \psi_{w_i\to u} \ge \sum_u \psi_{u\to w_i}$, we write
\begin{align}
\Pr_\psi\{W_{i+1}=w_{i+1}\,|\,W_i=w_i\}
  = \frac{\psi_{{w_i\to w_{i+1}}}}
  {\displaystyle\max\Big(\sum_u \psi_{u\to w_i},\sum_u \psi_{w_i\to u}\Big)}.
  \label{prob:trans}
\end{align}
To grasp the meaning of the denominator, note that the quantity
$|\sum_u \psi_{u\to w_i} - \sum_u \psi_{w_i\to u}|$ is the magnitude
of the in or out flow (depending on the case) at $w_i$.

The third, $\Pr_\psi\{W_\infty=w_k\,|\,w_k\}$, is the probability that the walk
ends (or exits) at $w_k$ conditioned on having reached $w_k$, and
\begin{align}
\Pr_\psi\{W_\infty=w_k\,|\,w_k\} 
  = \max \Big(0,\sum_u \psi_{u\to w_k} - \sum_u \psi_{w_k\to u}\Big). 
  \label{prob:end}
\end{align}

Next, we are going to find multiplicative bounds for all three factors
by focusing on ``dominant'' paths, and discarding ones with overall negligible
probability.

\textit{Dominant paths:}
It is straightforward to verify (from first principles) 
that the probability that an edge $(u,v)$
occurs in the electric walk equals $|\pot_e| = \pot_{u\to v}+\pot_{v\to u}$.
We call an edge \textit{short} if 
$|\pot_e|\le \epsilon$, where 
the exact asymptotic of $\epsilon > 0$ is determined later,
but for the moment $\nu \ll e \ll 1$. We restrict our
attention to \textit{dominant} paths $\gamma$ 
that traverse no short edges, and have
$\Pr_\pot\{W_0=w_0\}\ge\epsilon$ and 
$\Pr_\pot\{W_\infty=w_k\,|\,w_k\}\ge\epsilon$.

Indeed, by a union bound, the probability that
the electric walk traverses a non-dominant path 
is at most $2n\epsilon+n^2\epsilon$. This
will be negligible and such paths will be of no interest. In summary,
\begin{align}
\Pr_{\pot}\{\text{$W$ dominant}\} \ge 1 - 2n\epsilon - n^2\epsilon
  \label{prob:dom}
\end{align}
We now
condition on the event that $\gamma$ is dominant. 

The no short edge condition gives
$\epsilon\le|\pot_e|\le 1$, and using~(\ref{ineq:edge:add})
we derive the stronger \textit{multiplicative edge approximation condition}
\begin{align}
\frac{1}{\sigma} \le \frac{\tilde{\pot}_e}{\pot_e} \le \sigma,
\text{ where } \sigma = 1 + \frac{8D\nu}{\epsilon}, \label{ineq:edge:mul}
\end{align}
which holds as long as $\epsilon\ge 4\nu$, as guaranteed 
by the asymptotics of $\epsilon$. Also note that the latter condition ensures
that $\pot_e$ and $\tilde{\pot}_e$ have the same sign.
An extra factor of $2D$ is included in $\sigma$ with
foresight.

For the first factor~(\ref{prob:start}), we have
\begin{align} \label{bound:start}
\Pr_{\tilde{\pot}}\{W_0=w_0\} 
  &= \sum_u \tilde{\pot}_{w_0\to u} - \sum_u \tilde{\pot}_{u\to w_0} \\
  &\ge \sum_u \pot_{w_0\to u} - \sum_u \pot_{u\to w_0} - 4D\nu 
    &\text{use~(\ref{ineq:edge:add})} \notag \\
  &\ge \Pr_{\pot}\{W_0=w_0\} \Big(1-\frac{4D\nu}{\epsilon}\Big)
    &\text{use $\Pr_{\pot}\{W_0=w_0\}\ge \epsilon$} \notag \\
  &\ge \frac{1}{\sigma}\Pr_{\pot}\{W_0=w_0\}
    &\text{use $\epsilon\le 1/2$}. \notag
\end{align}

For the second factor~(\ref{prob:trans}), assume 
$\sum_u \psi_{u\to w_i} \ge \sum_u \psi_{w_i\to u}$. An identical
argument holds in the other case. 
Abbreviate
\begin{align*}
\Pr_{\tilde{\pot}}\{w_{i+1}\,|\,w_i\}
  := \Pr_{\tilde{\pot}}\{W_{i+1}=w_{i+1}\,|\,W_i=w_i\}.
\end{align*}
Path dominance implies
$\sum_u \pot_{u\to w_i} \ge \epsilon$, and so
\begin{align} \label{bound:trans}
\Pr_{\tilde{\pot}}\{w_{i+1}\,|\,w_i\}
  &= \frac{\tilde{\pot}_{w_i\to w_{i+1}}}
    {\displaystyle\sum_u \tilde{\pot}_{u\to w_i}} \\
  &\ge \frac{\sigma^{-1}\cdot\pot_{w_i\to w_{i+1}}}
    {\displaystyle\sum_u \pot_{u\to w_i}+4D\nu}
    &\text{use~(\ref{ineq:edge:mul}) and~(\ref{ineq:edge:add})} \notag \\
  &\ge \sigma^{-2} \frac{\pot_{w_i\to w_{i+1}}}{\sum_u \pot_{u\to w_i}}
    &\text{use $\epsilon \le 1/2$ and 
      $\sum_u \pot_{u\to w_i} \ge \epsilon$} \notag \\
  &= \frac{1}{\sigma^2} \Pr_{\pot}\{w_{i+1}\,|\,w_i\}. \notag
\end{align}

For the third factor~(\ref{prob:end}), similarly to the first, we have
\begin{align} \label{bound:end}
&\Pr_{\tilde{\pot}}\{W_\infty=w_k\,|\,w_k\}
  = \\
  &\qquad= \sum_u \tilde{\pot}_{u\to w_k} 
    - \sum_u \tilde{\pot}_{w_k\to u} \notag \\
  &\qquad\ge \sum_u \pot_{u\to w_k} - \sum_u \pot_{w_k\to u} -4D\nu
    &\text{use~(\ref{ineq:edge:add})} \notag \\
  &\qquad\ge \Pr_{\pot}\{W_\infty=w_k\,|\,w_k\}\Big(1-\frac{4D\nu}{\epsilon}\Big)
    &\text{use $\Pr_{\pot}\{W_\infty=w_k\,|\,w_k\}\ge \epsilon$} \notag \\
  &\qquad\ge \frac{1}{\sigma} \Pr_{\pot}\{W_\infty=w_k\,|\,w_k\}
    &\text{use $\epsilon\le 1/2$}. \notag
\end{align}

\textit{Dominant path bound:}
We now obtain a relation between $\Pr_\pot\{W=\gamma\}$ and 
$\Pr_{\tilde{\pot}}\{W=\gamma\}$ by combinging the bounds
(\ref{bound:start}), (\ref{bound:trans}) and (\ref{bound:end}) 
with (\ref{prob:path}):
\begin{align}
\frac{\Pr_{\tilde{\pot}}\{W=\gamma\}}{\Pr_\pot\{W=\gamma\}} 
  &\ge \frac{1}{\sigma^{2n+2}} 
    &\text{apply bounds, and path length $\le n$}
      \label{approx:theta} \\
  &\ge \Big(1-\frac{8D\nu}{\epsilon}\Big)^{2n+2}
    &\text{use $\sigma^{-1} \ge 1 - 8D\nu/\epsilon$} \notag \\
  &\ge \exp\Big(-\frac{16 D\nu}{\epsilon}\Big)^{2n+2}
    &\text{use $1-x\ge e^{-2x}$} \notag \\
  &=: \theta \notag
\end{align}

\textit{Statistical difference:}
Abbreviate $p(\gamma):=\Pr_\pot\{W=\gamma\}$ and
$q(\gamma):=\Pr_\pot\{W=\gamma\}$. Below, $\gamma$ iterates
through all paths, $\zeta$ iterates through dominant 
paths and $\xi$ iterates through non-dominant paths. 
We bound the statistical difference~(\ref{stat:diff}),
using~(\ref{approx:theta}) which says $q(\zeta)\ge\theta\cdot p(\zeta)$,
\begin{align}
&\sum_{\gamma}|p(\gamma)-q(\gamma)| = \notag\\
  &\qquad= \sum_\zeta|p(\zeta)-q(\zeta)|+\sum_\xi|p(\xi)-q(\xi)| \notag \\
  &\qquad\le \sum_\zeta|(1-\theta)p(\zeta)-\big(q(\zeta)-\theta p(\zeta)\big)|
    +\sum_\xi q(\xi) 
    &\text{use $\theta<1$} \notag \\
  &\qquad\le \sum_\zeta|q(\zeta)-\theta p(\zeta)|+\sum_\xi q(\xi) \notag \\
  &\qquad= 1 - \sum_\zeta p(\zeta) \notag \\
  &\qquad= (1-\theta) + \theta\sum_\zeta p(\zeta) \label{stat:diff:2}
\end{align}
In this final step, we pin-point the asymptotics of $\epsilon$ that
simultaneously minimize the two terms of~(\ref{stat:diff:2}). In the following,
we parameterize $\epsilon=n^{-B}$ and use~(\ref{prob:dom}),
\begin{align*}
&(1-\theta) + \theta\sum_\zeta p(\zeta) = \\
  &\qquad= 1-\exp\Big(-\frac{16D\nu}{\epsilon}\Big)^{2n+2}
    + \exp\Big(-\frac{16D\nu}{\epsilon}\Big)^{2n+2}
      \big(2n\epsilon+n^2\epsilon\big) \\
  &\qquad= 1-\exp O\big(-Dn^{B-A+1}\big) 
    + n^{2-B} \cdot\exp O\big(-Dn^{B-A+1}\big) \\
  &\qquad= O\big(Dn^{B-A+1}\big)
    + n^{2-B} \cdot\exp O\big(-Dn^{B-A+1}\big), 
    \qquad\text{use $1-e^{-x}\le x$}\\
  &\qquad= O\big(n^{B-A+2}\big) + O\big(n^{2-B}\big) 
    \qquad\text{use $D\le n$}\\
  &\qquad= O\big(n^{2-\frac{A}{2}}\big),
    \qquad\text{set $B=A/2$.} &\qedhere
\end{align*}
\end{proof}

\begin{lemma}\label{lem:l2:apx}
If $x,y\in\ell_2$ and $\|x-y\|_2\le\nu$, then for all $i\neq j$,
\begin{align*}
(x_i-x_j)-2\nu \le y_i - y_j \le (x_i-x_j)+2\nu.
\end{align*}
\end{lemma}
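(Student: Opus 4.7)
The plan is entirely elementary: the lemma is a packaging of the obvious fact that an $\ell_2$-small perturbation produces a coordinate-wise small perturbation, together with one application of the triangle inequality. I would not expect any real obstacle; the statement exists so that it can be invoked cleanly inside the proof of Theorem~\ref{theo:approx} when passing from an $\ell_2$ bound on $\pot^{[v]}-\tilde{\pot}^{[v]}$ to an additive bound on edge potential differences.

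The first step is the coordinate projection bound. For any single index $k$, the map $z\mapsto z_k$ is a contraction from $\ell_2$ to $\matha{R}$, since $|z_k| = \bigl(z_k^2\bigr)^{1/2} \le \bigl(\sum_\ell z_\ell^2\bigr)^{1/2} = \|z\|_2$. Applying this to $z := x-y$ yields $|x_k - y_k| \le \nu$ for every coordinate $k$, and in particular for $k = i$ and $k = j$.

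The second step is a regrouping plus the triangle inequality. Write
\[
(y_i - y_j) - (x_i - x_j) \;=\; (y_i - x_i) \;-\; (y_j - x_j),
\]
so that
\[
\bigl|(y_i - y_j) - (x_i - x_j)\bigr| \;\le\; |y_i - x_i| + |y_j - x_j| \;\le\; 2\nu.
\]
Rearranging this two-sided inequality gives exactly
\[
(x_i - x_j) - 2\nu \;\le\; y_i - y_j \;\le\; (x_i - x_j) + 2\nu,
\]
as claimed. The only thing to double-check is that the hypothesis $i\neq j$ is in fact unused (the proof goes through for $i=j$ as well, trivially), so the stated restriction is harmless; it is presumably there because only off-diagonal pairs arise at the call site.
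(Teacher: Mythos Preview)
Your proof is correct and matches the paper's own argument essentially line for line: bound each coordinate difference by $\nu$ via $|x_k-y_k|\le\|x-y\|_2$, then combine the two bounds by the triangle inequality. The only difference is that you spell out the regrouping step explicitly, whereas the paper just writes ``combining the two proves the lemma.''
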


\begin{proof}
We have $(x_i-y_i)^2 \le \|x-y\|_2^2\le \nu^2$, implying $|x_i-y_i|\le \nu$.
Similarly for $j$. Combining the two proves the lemma.
\end{proof}

\section{Proof of Theorem~\ref{theo:stretch}}
\label{sec:latency-proof}

\begin{proof}[Proof of Theorem~\ref{theo:stretch}]
Let $X^{[s,t]}_e$ be the indicator that edge $e$ participates in the
electric walk between $s$ and $t$. Then the latency can be expressed as
\begin{align*}
\max_{s\neq t} \sum_e \E X^{[s,t]}_e
  &= \max_{s\neq t} 
    \sum_{(u,v)}\big|(\delta_u-\delta_v)\Li(\delta_s-\delta_t)\big| \\
  &= \max_{s\neq t} \|B\Li(\delta_s-\delta_t)\|_1 
  = \|B\Li B^*\|_{1\to 1} = \|\Pi\|_{1\to 1}.
\end{align*}
The latter is bounded by Theorem~\ref{theo:ve:ub} and 
$\|\Pi\|_{1\to 1} \le m^{1/2}$, as in~(\ref{ineq:univ}) e.g.
\end{proof}
\begin{remark}
For expanders, this theorem is not trivial. In fact, there exist
path realizations of the electric walk which can traverse up to $O(n)$ edges.
Theorem~\ref{theo:stretch} asserts that this happens with small probability. On
the other hand, in a bounded-degree expander, even if $s$ and $t$ are adjacent
the walk will still take a $O(\log n)$-length path with constant probability.
\end{remark}

\section{Proof of Theorem~\ref{theo:robust2}}
\label{sec:robust-proof}

\begin{proof}[Proof of Theorem~\ref{theo:robust2}]
Let $f_{\opt}$ be a max-flow routing of the uniform demands and 
let $\theta$ be the fraction of the demand set that is routed by $f_{\opt}$.
The Multi-commodity Min-cut Max-flow Gap Theorem (Theorem 2, in~\cite{mcmf})
asserts
\begin{align*}
O(\ln n)\cdot \theta 
  \ge \min_{S\subset V} \frac{|E(S,S^\cmpl)|}{|S|\cdot|S^\cmpl|}
  \ge \frac{1}{n}\cdot 
    \min_{S\subset V} \frac{|E(S,S^\cmpl)|}{\min\{|S|,|S^\cmpl|\}}
  = \frac{\alpha}{n}
\end{align*}
Thus the total demand flown by $f_{\opt}$ is no less than
$\theta \binom{n}{2} \ge \Omega(\alpha n / \ln n)$. Normalize $f$ (by scaling)
so it routes the same demands as $f_{\opt}$. If $k$ edges are removed,
then at most $\eta k$ flow is removed from $f$, which is at most a fraction
$\eta k \cdot O(\ln n / \alpha n)$ of the total flow.
Substitute $x=k/m$ and use $m\le d_{\max}n$ to complete the proof.
\end{proof}

\section{Proof of Theorem~\ref{coro:power}}

Theorem~\ref{coro:power} is implied by the following theorem
by specializing $\epsilon=O(n^{-5})$:

\begin{theorem}\label{theo:power}
Let $G$ be a graph, whose Laplacian $L$ has smallest eigenvalue $\lambda$
and whose maximum degree is $D$. Then, for every $y$ with $\|y\|_2=1$
the vector $x=\Li y$ can be approximated using
\begin{align*}
\tilde{x} = \frac{1}{2D}\sum_{i=0}^d \big(I-\frac{L}{2D}\big)^i y,
\end{align*}
so that for every $\epsilon>0$, 
\begin{align*}
\|x-\tilde{x}\|_2 \le \epsilon,
\text{ as long as } 
d \ge \Omega(1)\cdot
 \ln \frac{1}{\lambda\epsilon D}\cdot
  \Big(\ln\frac{1}{1-\lambda}\Big)^{-1}.
\end{align*}
\end{theorem}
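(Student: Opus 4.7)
The approach will be a direct spectral calculation: diagonalize $L$ and compare $\tilde x$ and $x$ eigenvector by eigenvector. Write $L = \sum_{i=1}^n \lambda_i u_i u_i^*$ with $0 = \lambda_1 < \lambda = \lambda_2 \le \cdots \le \lambda_n$, and use the standard bound $\lambda_n \le 2D$ (from Gershgorin, or $L \preceq 2D\cdot I$) to ensure that $M := I - L/(2D)$ is PSD with eigenvalues $\mu_i = 1 - \lambda_i/(2D) \in [0,1]$. In particular, $\mu_1 = 1$ but $\mu_i \le 1 - \lambda/(2D)$ for every $i \ge 2$, which is the only spectral fact we need.

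The computation is then routine: assuming (without loss of generality for the application, since $\Li$ annihilates $\vec{1}$ anyway) that $y \in \vec{1}^\perp$, expand $y = \sum_{i \ge 2} \langle y, u_i\rangle u_i$. Then $x = \sum_{i\ge 2}\lambda_i^{-1}\langle y,u_i\rangle u_i$, while the geometric sum gives
\begin{equation*}
\tilde x \;=\; \frac{1}{2D}\sum_{k=0}^d M^k y \;=\; \sum_{i\ge 2}\frac{1-\mu_i^{d+1}}{2D(1-\mu_i)}\langle y, u_i\rangle u_i \;=\; \sum_{i\ge 2}\frac{1-\mu_i^{d+1}}{\lambda_i}\langle y, u_i\rangle u_i.
\end{equation*}
Subtracting and applying Parseval yields
\begin{equation*}
\|\tilde x - x\|_2^2 \;=\; \sum_{i\ge 2}\frac{\mu_i^{2(d+1)}}{\lambda_i^2}|\langle y,u_i\rangle|^2 \;\le\; \frac{(1-\lambda/(2D))^{2(d+1)}}{\lambda^2}\|y\|_2^2,
\end{equation*}
where the last step uses $\mu_i \le 1 - \lambda/(2D)$ and $\lambda_i \ge \lambda$ for $i\ge 2$.

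It remains to solve $\|\tilde x - x\|_2 \le (1-\lambda/(2D))^{d+1}/\lambda \le \epsilon$ for $d$, which rearranges to $d + 1 \ge \ln(1/(\lambda\epsilon)) / \ln(1/(1-\lambda/(2D)))$; this matches the order of magnitude claimed in the theorem (with the spectral gap of $M$ governing the convergence rate). The only mildly subtle point is the component of $y$ along $\vec{1}$: since $M\vec{1} = \vec{1}$, that component contributes a drift $\frac{d+1}{2D}\langle y, \vec{1}/\sqrt{n}\rangle\vec{1}$ to $\tilde x$ while $\Li$ annihilates it, so strictly speaking one applies the argument to $\pi_{\bot\vec{1}}y$. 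This drift is harmless in the only place we invoke the result (equation~\ref{fwd:flow} and Corollary~\ref{coro:approx}) because only potential differences $\pot^{[u]} - \pot^{[v]}$ are ever used, and the common $\vec{1}$ offset cancels. The main (and only) obstacle is therefore bookkeeping this projection cleanly; the spectral estimate itself is immediate.
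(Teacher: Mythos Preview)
Your proof is correct and follows essentially the same spectral argument as the paper: the paper packages the computation as an operator-norm bound on the tail $\sum_{i>d}(I-N)^i$ (with $N=L/(2D)$) and arrives at the identical inequality $d+1 \ge \ln(1/(\lambda\epsilon))\big/\ln\frac{1}{1-\lambda/(2D)}$, while you reach it via an explicit eigendecomposition and Parseval. Your explicit treatment of the $\vec{1}$-component (and why the resulting drift is harmless for potential differences) is in fact cleaner than the paper, which silently assumes $y\in\vec{1}^\perp$ when writing $N^\dag=\sum_{i\ge 0}(I-N)^i$.
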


\newcommand{\Ni}{N^{\dag}}
\begin{proof}[Proof of Theorem~\ref{theo:power}] 
We normalize $L$ via $N=L/\tau$ (and so $L^{-1}=N^{-1}/\tau$),
where $\tau=2D$. Since $\tau= 2D\ge \lambda_{\max}(L)$, the
eigenvalues of $N$ are in $[0,1]$. 
In this case, the Moore-Penrose inverse of $N$ is
given by $\Ni=\sum_{i=0}^{\infty} (I-N)^i$.  
Set $\Ni_0=\sum_{i=0}^d (I-N)^i$ and $\Ni_1 = \Ni - \Ni_0$.
Our aim is to minimize $d$ so that 
\begin{align*}
\|x-\tilde{x}\|_2 
  = \Big\|\frac{\Ni_0+\Ni_1}{\tau}y - \frac{\Ni_0}{\tau}y\Big\|_2
  = \Big\|\frac{\Ni_1}{\tau}y\Big\|_2 
  \le \Big\|\frac{\Ni_1}{\tau}\Big\|_{2\to 2}
  \le \epsilon,
\end{align*}
where $\|A\|_{2\to 2}:=\sup_{x\neq 0}\|Ax\|_2/\|x\|_2$ 
denotes the matrix spectral norm. Set
$\kappa:=\tau/\lambda_{\min}$, so that $\kappa^{-1}$ is the smallest
eigenvalue of $N$,
\begin{align}
\|\Ni_1\|_{2\to 2} = \big\|\sum_{i=d+1}^{\infty} (I-N)^i\big\|_{2\to 2}
  &\le \sum_{i=d+1}^\infty\|(I-N)^i\|_{2\to 2} \notag \\
  &\le \sum_{i=d+1}^\infty\left(1-\kappa^{-1}\right)^i
  = (1-\kappa^{-1})^{d+1} \kappa \label{pow:deg}
\end{align}
Setting~(\ref{pow:deg}) less than $\tau\epsilon$ gives
\begin{align*} 
d \ge \frac{\ln \kappa/(\tau\epsilon)}{\ln \kappa/(\kappa-1)}. &\qedhere
\end{align*}
\end{proof}

\section{Symmetrized algorithm}
\label{sec:symmetrize}

In this section we discuss how to modify the computational procedure, given in
the Section~\ref{sec:intro}, in order to apply it to graphs of 
unbounded degree.
The described algorithm for computing $\pot^{[w]}=\Li\kron_w$ relies on the 
approximation of $\Li$ via the Taylor series 
$\frac{1}{1-x}=\sum_{i=0}^\infty (1-x)^i$. 
The series converges only when $\|x\|_2<1$,
which is ensured by setting $x=\frac{L}{2d_{\max}}$, 
and using that $\|L\|_{2\to 2} < 2d_{\max}$.
Thus we arrive at 
$2d_{\max}\cdot \Li = \sum_{i=0}^\infty (I-\frac{L}{2d_{\max}})^i$.
This approach continues to work if we replace $d_{\max}$ with any upper bound
$h_{\max}\ge d_{\max}$, obtaining $\Li = \frac{1}{2h_{\max}} \sum_{i=0}^\infty
(I - M)^i$ where $M=\frac{L}{2h_{\max}}$, however this is done at the expense
of slower convergence of the series. Since in a distributed setting all
vertices must agree on what $M$ is, a worst-case upper bound $h_{\max}=n$ must
be used, which results in a prohibitively slow convergence even for expander
graphs.

Instead, we pursue a differnt route. Let $\mathcal{L}= D^{-1/2}LD^{-1/2}$
be the \textit{normalized Laplacian} of $G$, where $D\in\matha{R}^{n\times n}$
is diagonal with $D_{v,v}=\deg(v)$. One always has $\|\mathcal{L}\|_{2\to 2}
\le 2$ (Lemma 1.7 in~\cite{chung}) while at the same time
$\lambda_{\min}(\mathcal{L})
\ge \max \big\{\frac{\beta^2}{2},\frac{\alpha^2}{4d_{\max} +
2d_{\max}\alpha}\big\}$ (Theorems 2.2 and 2.6 in~\cite{chung}), 
where $\alpha$ and $\beta$ are the vertex- and
edge-expansion of $G$, respectively.
Set $M=\mathcal{L}/3$, so that $\|M\|_{2\to 2} < 1$.
Recall that the aim of our distributed procedure is to 
compute $\pot^{[w]}_u$ at $u$ (for all $w$). We achieve this using
the following:
\begin{align*}
\pot^{[w]}_u 
  = \kron_u^* \Li \kron_w
  = \kron_u^* D^{-1/2}\frac{M^\dag}{3}D^{-1/2} \kron_w
  = \frac{\kron_u^*}{\sqrt{\deg(u)}} 
    \frac{\sum_{i=0}^\infty (I-M)^i}{3} 
    \frac{\kron_w}{\sqrt{\deg(w)}} 
\end{align*}
The key facts about the series in the left-hand side are that
(i) it converges quickly when $G$ is an expander and (ii) all vertices
can compute $M$ locally, in particular, without requiring any global knowledge
like e.g. an upper bound on $d_{\max}$.

\end{document}